\documentclass[twoside,reqno,10pt]{amsart}

\usepackage{amsmath, amssymb, amsfonts}
\usepackage{url}
\usepackage{graphicx, color}

 \newtheorem{theorem}{Theorem}
 \newtheorem{lemma}{Lemma}
 \newtheorem{corollary}{Corollary}
 \newtheorem{proposition}{Proposition}

 \newtheorem{definition}{Definition}
 \newtheorem{remark}{Remark}

 \newcommand{\fclass}[2]{\ensuremath{  \mathbb{#1}^{\, #2} }}
 
 \usepackage{amsaddr}
 
 \begin{document} 
 

 \title[Analytical Solution of the SIR model]{Analytical parameter estimation of the SIR epidemic model. Applications to the COVID-19 pandemic. }
 	
 	\author{Dimiter Prodanov }
\address{ 1) Environment, Health and Safety, IMEC, Belgium \\
	 2) MMSIP, IICT, BAS, Bulgaria}
%


 
 \begin{abstract}
 	The dramatic outbreak of the coronavirus disease 2019 (COVID-19) pandemics and its ongoing progression boosted the scientific community's interest in epidemic modeling and forecasting. 
 	The SIR (Susceptible-Infected-Removed) model is a simple mathematical model of epidemic outbreaks, yet for decades it evaded the efforts of the community to derive an explicit solution. 
 	The present work demonstrates that this is a non-trivial task.  
 	Notably, it is proven that the explicit solution of the model requires the introduction of a new transcendental special function, related to the Wright's Omega function. 
 	The present manuscript reports new analytical results and numerical routines suitable for parametric estimation of the SIR model.  	
 	The manuscript introduces iterative algorithms approximating the incidence variable, which allows for estimation of the model parameters from the numbers of observed cases. 
 	The numerical approach is exemplified with data from the European Centre for Disease Prevention and Control (ECDC) for several European countries in the period Jan 2020 -- Jun 2020. 
 \end{abstract}
 \maketitle
	
 Keywords: SIR model; special functions; Lambert W function; Wright Omega function
MSC: 92D30; 92C60; 26A36; 33F05; 65L09

 \section{Introduction}\label{sec:intro}

The coronavirus 2019 (COVID-19) disease was reported to appear for the first time in Wuhan, China, and later it spread to Europe, which is the subject of the presented case studies,  and eventually worldwide.
While there are individual clinical reports for COVID-19 re-infections, the present stage of the pandemic still allows for the application of a relatively simple epidemic model, which is the subject of the present report. 
The motivation behind the presented research was the intention to accurately model the short-term dynamics of the outbreaks of COVID-19 pandemics, which by the time of writing, has infected more than 27 million individuals worldwide. 
The efforts to contain the spread of the pandemic induce sustained social and economic damage. 
Therefore, the ability to accurately forecast short to medium-term  epidemic  outbreak's dynamics is of substantial public interest.  

The present manuscript gives a comprehensive analytical and numerical treatment of the SIR (Susceptible-Infected-Removed) epidemiological model. 
The SIR model was introduced in 1927 by Kermack and McKendrick in 1927 to study the plague and cholera epidemics in London and Bombay \cite{Kermack1927}.
To date the SIR model remains as a cornerstone of mathematical epidemiology.  
It is a deterministic model formulated in terms of ordinary differential equations (ODEs). 
The model has been extensively used to study the spread of various infectious diseases (see the monograph of Martcheva \cite{Martcheva2015}).

The objective of the present paper is to demonstrate numerical routines for curve-fitting allowing for estimation of the parameters of the SIR model from empirical data. 
To this end, the paper exhibits an algorithm, which can be used to compute the population variables as functions of time. 
In contrast to previous approaches, I do not consider the SIR model as an initial value problem but as a problem in the theory of  special functions.
This change of perspective allows for handling noisy data, e.g. time series having fluctuations caused by delays and accumulation of case reporting. 
The numerical approach is applied to the COVID-19 incidence   and case fatality data in  different European countries, having different population densities and dynamics of the epidemic outbreaks.

 \section{The SIR model}\label{sec:sir}
 
 The SIR model is formulated in terms of 3 populations of individuals.
 The \textit{S} population consists of all individuals susceptible to the infection of concern. 
 The \textit{I} population comprises the infected individuals.  
 These persons have the disease and can transmit it to the susceptible individuals.
 The \textit{R} poulation  cannot become infected and the individuals cannot transmit the disease to others.
 The model comprises a set of three ODEs:
 \begin{align}
 \dot{S }(t) & = - \frac{\beta}{N} S(t) I(t) \\
 \dot{I }(t) & =  \frac{\beta}{N} S(t) I(t) - \gamma I(t) \\
 \dot{R }(t) & =  \gamma I(t)
 \end{align}
 The model assumes a constant overall population $N= S + I + R$.
  An disease carrier infects on average $\beta$ individuals per day, for an average time of $1/\gamma$ days. The $\beta$ parameter is called \textit{disease transmission rate}, while $\gamma$ -- \textit{recovery rate}.
  The average number of infections arising from an infected individual is then modelled by the number $R_0=\frac{\beta}{\gamma}$, the \textit{basic reproduction number}.
  Typical initial conditions are 
  $ S(0)=S_0, I(0)=I_0, R(0)=0  $ \cite{Kermack1927}.
  
 The model can be re-parametrized using normalized variables as
 \begin{align}
 \dot{s} & = - s i \label{eq:ds1} \\
 \dot{i} & =   s i  - g i, \quad g= \frac{\gamma}{\beta} =\frac{1}{R_0} \label{eq:di1}\\
 \dot{r} & =   g i,
 \end{align}
 subject to normalization $s+i+r=1$ and time rescaling $ \tau =\beta t$.
 Therefore, since $i(\tau)$ is integrable on $[0, \infty )$ then $i(\infty)=0$.
 \section{The analytical solution}
The analytical solution will be formulated first in an implicit form.
Since there is a first integral by construction the system can be reduced to two equations:
  \begin{align}
 \frac{di}{ds} &= -1 + \frac{g}{s} \label{eq:dids} \\
 \frac{di}{dr} & = \frac{s}{g} -1 \label{eq:didr}
 \end{align}
\begin{remark}
	From this formulation
	 \[
	R_e=N\frac{s_0}{g}=\frac{S_0 \beta}{\gamma} \geq 1
	\]
	must hold for the infection to propagate. 
	$R_e$ is called the effective reproductive number, while the basic reproduction number is $R_0=R_e N$ \cite{Weiss2013}.
\end{remark}

In order to solve the model we will consider the two equations separately. 
 Direct integration of the   equation \ref{eq:dids} gives
 \[
 i = -s + g \log{s} +c
 \]
 where the constant $c$ can be determined from the initial conditions.
 In the present treatment, the constant $c$ will be left indeterminate to be assigned by the different re-parametrization procedures. 
 The \textit{s} variable can be expressed explicitly in terms of the Lambert W function \cite{Corless1996}:
 \begin{equation}
  s= - g W_{\pm} \left(-  \frac{e^\frac{i-c}{g}}{g} \right) 
 \end{equation}
where the signs denote the two different real-valued branches of the function. 
Note, that both branches are of interest since the argument of the Lambert W function is negative.  
 Therefore, the ODE \ref{eq:di1} can be reduced to the first-order autonomous system  
 \begin{equation}\label{eq:isr}
\dot{i}=- i  g \left( W_{\pm} \left(- \frac{ e^\frac{i-c}{g}}{g} \right) +1\right) 
 \end{equation}
valid for two disjoined domains on the real line. 
The ODEs can be solved for the time $\tau$ as
 \begin{equation}\label{eq:invi}
 -\int \frac{ d i}{i   \left( W_{\pm} \left(-  \frac{e^\frac{i-c}{g}}{g} \right) +1 \right) } = g  \tau
 \end{equation}
\begin{remark}\label{rem:omega}
	There is another equivalent form of the system using the Wright $\Omega$ function \cite{Corless2002}	since
	\[
	{W}\left( -\frac{{{ e}^{\frac{i-c}{g}
	}}}{g}\right) = \Omega \left( \frac{i-c}{g} - \log{(-g)} \right) 
	\]
	so that 
	\[
	\dot{i}=- i  g \left( \Omega_{\pm} \left(\frac{i-c - g\log{(-g)}}{g}  \right) +1\right)
	\]
\end{remark}

 The \textit{s} variable can be determined by substitution in  equation  \ref{eq:ds1}, resulting in the autonomous system
 \begin{equation}
 \dot{s} = - s  \left( -s + g \log{s} +c\right) 
 \end{equation}
 which can be solved implicitly as
 \begin{equation}\label{eq:ints}
\int \frac{d s}{s  \left( s - g \log{s} -c\right)}= \tau
\end{equation}
 Finally, the \textit{r} variable can also be conveniently expressed in terms of \textit{i}.
 For this purpose we solve the differential equation
 \[
\frac{dr}{di} = \frac{g}{s - g} = \frac{-1}{1 + W_{\pm} \left(- \frac{e^\frac{i-c}{g}}{g} \right) }
 \] 
 Therefore,
 \[
 r = c_1 -g \log {\left( -g W_{\pm}\left(-\frac{e^\frac{i -c}{g}}{g} \right) \right)   }
 \]
 by Prop. \ref{prop:indefint2}.
 On the other hand,
 \begin{multline*}
 g \log {\left( -g W\left(-\frac{e^\frac{i -c}{g}}{g} \right) \right)}   =
 	 g\, \left( \log{\left( \frac{{{ e}^{\frac{i-c}{g}}}}{g}\right) }-{W}\left( -\frac{{{ e}^{\frac{i-c}{g}}}}{g}\right) \right) =\\
  	- g W\left(-\frac{e^\frac{i -c}{g}}{g} \right) +i-g \log{g} -c = s +i -g \log{g} -c
 \end{multline*}
So that
\[
r = g W\left(-\frac{e^\frac{i -c}{g}}{g} \right) -i + c_1
\]
For the purposes of curve fitting we assume that $i(-\infty)=r(-\infty)=0$.
Therefore, 
\[
c_1 =  - g W_{-}\left(-\frac{e^{-\frac{c}{g}}}{g} \right)
\]

\subsection{Peak value parametrization}\label{sec:peak}

The upper terminal of integration can be determined by the requirement for the real-valuedness of $i$. 
This value of $i$ is denoted as $i_m$; that is
\[
  W_{\pm} \left(- \frac{e^\frac{i_m-c}{g}}{g} \right) =-1
\]
Therefore, 
\begin{equation}\label{eq:im}
i_m = c - g \log{g} -g
\end{equation}
The peak-value parametrization is supported by the following result.
\begin{proposition}\label{prop:imax}
	i(t) attains a global maximum  $i=i_m=  c - g \log{g} -g $.   
\end{proposition}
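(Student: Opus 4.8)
The plan is to sidestep the explicit Lambert-$W$/Wright-$\Omega$ representation and to argue directly from the reduced planar system, using the first integral $i=-s+g\log s+c$ together with the monotonicity already built into \ref{eq:ds1}. First I would record the elementary invariance facts: since $\{i=0\}$ and $\{s=0\}$ are invariant for the flow of \ref{eq:ds1}--\ref{eq:di1}, any orbit other than the trivial one $i\equiv 0$ satisfies $i(\tau)>0$ and $s(\tau)>0$ for all finite $\tau$, by forward and backward uniqueness. Hence $\dot s=-si<0$, so $s$ is strictly decreasing along such an orbit, and rewriting \ref{eq:di1} as $\dot i=i\,(s-g)$ shows that the only candidate for an interior critical point of $i$ is a time $\tau_m$ with $s(\tau_m)=g$.

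Next I would show that such a $\tau_m$ exists and is finite, which is where the propagation hypothesis enters. We have $i(\infty)=0$ (as recalled in Section \ref{sec:sir}), and, because $s$ is decreasing, $\sup_\tau s(\tau)$ equals its value at the left endpoint, which is $\ge g$ by the propagation condition in the remark following \ref{eq:didr} (in the curve-fitting normalization this value is simply $s(-\infty)=1>g$). If one had $s(\tau)\ge g$ for every $\tau$, then $\dot i=i\,(s-g)\ge 0$ would make $i$ nondecreasing, and letting the upper time tend to $+\infty$ would force $i\le i(\infty)=0$ everywhere, contradicting $i>0$. (Equivalently, $0=i(\infty)-i(-\infty)=\int_{\mathbb{R}}i\,(s-g)\,d\tau$ with $i>0$ already forces $s-g$ to change sign.) Therefore $s$ drops strictly below $g$; being continuous and strictly monotone, it equals $g$ at exactly one finite time $\tau_m$. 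Consequently $\dot i>0$ on $(-\infty,\tau_m)$ and $\dot i<0$ on $(\tau_m,\infty)$, so $i$ increases strictly up to $\tau_m$ and decreases strictly afterwards; hence $\tau_m$ is the location of a strict global maximum of $i$ over the whole time axis.

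To identify the maximal value, I would substitute $s(\tau_m)=g$ into $i=-s+g\log s+c$, which yields $i(\tau_m)=i_m$ as stated. As a cross-check, and to make contact with the peak-value parametrization of Section \ref{sec:peak}, the same number falls out of $s=-g\,W_{\pm}\!\left(-e^{(i-c)/g}/g\right)$: real-valuedness of $s$ requires the argument of $W$ to be at least $-1/e$, which bounds $i$ above by exactly $i_m$, the bound being attained precisely when the two real branches coalesce at $W=-1$, i.e. when $s=g$.

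I expect the only real obstacle to be guaranteeing that the supremum is genuinely attained, i.e. that $\tau_m$ is finite; this is why one must bring in $i(\infty)=0$ and the sign condition on $s_0$, rather than treat the claim as a purely local differentiation. The remaining care concerns the degenerate regimes — the trivial orbit $i\equiv 0$, and the boundary case $s_0=g$ in which the maximum sits at the initial instant — which should be excluded by hypothesis or recorded explicitly.
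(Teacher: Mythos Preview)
Your argument is correct and, in fact, more careful about attainment than the paper's own proof. The approaches, however, differ. The paper works entirely through the Lambert-$W$ representation \ref{eq:isr}: it parametrizes so that $i(0)=i_m$, checks that $W(-1/e)=-1$ forces $\dot i(0)=0$, and then observes that real-valuedness of $W$ requires the argument $-e^{(i-i_m)/g-1}\ge -1/e$, i.e.\ $i\le i_m$. That is precisely the content of your final ``cross-check'' paragraph, which you present only as a secondary verification. Your primary route is instead a phase-plane argument on \ref{eq:ds1}--\ref{eq:di1}: monotonicity of $s$, the factorization $\dot i=i(s-g)$, and the boundary information $i(\infty)=0$ together force a unique crossing $s(\tau_m)=g$, after which substitution into the first integral $i=-s+g\log s+c$ gives the value $i_m$. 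This is more elementary in that it never invokes the special-function solution, and it actually establishes that the supremum is achieved at a finite time, a point the paper's proof takes for granted by simply declaring the parametrization $i(0)=i_m$. The cost is a longer argument and the need to import the propagation hypothesis and the limit $i(\infty)=0$; the paper's version is terser but leans on the analytic structure of $W$.
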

\begin{proof}
	We use a parametrization for which $i(0)=i_m$. Then
	\[
	\dot{i} (\tau)= - g i \left( W_{\pm} \left(-  {e^{\frac{i-i_m}{g}-1}} \right) +1 \right) 
	\]
	It follows that
	\[
	\dot{i}(0)=0, \quad i(0)=i_m
	\]
	so $i_m$ is an extremum. 
	In the most elementary way since W(z) should be real-valued then 
	\[
	-  e^{\frac{i-i_m}{g}-1}\geq  -1/e \Longrightarrow \frac{i-i_m}{g} \leq 0
	\]
	Hence, $i \leq i_m$.
\end{proof}

If we consider formally the phase space $  \left( z \times  y= - g z \left( W_{\pm} \left(-  {e^{\frac{z-i_m}{g}-1}} \right) +1 \right)   \right) $ the following argument allows for the correct branch identification. 
For $i \rightarrow -\infty$ $W_{-} \left(-  {e^{\frac{z-i_m}{g}-1}} \right) \rightarrow -\infty $ so $y <0$; 
while $W_{+} \left(-  {e^{\frac{z-i_m}{g}-1}} \right) \rightarrow 0^{+} $ so $y>0$.
Therefore, if we move the origin as $t(0)=i_m$ then conveniently 
\begin{flalign}\label{eq:invip}
-\int^{i}_{i_m}\frac{ dz}{z   \left( W_{+} \left(-   {e^{\frac{z-i_m}{g}-1}}  \right) +1 \right) } &= g  \tau, \quad \tau>0 \\
-\int^{i}_{i_m}\frac{ dz}{z   \left( W_{-} \left(-   {e^{\frac{z-i_m}{g}-1}}  \right) +1 \right) } &= g  \tau, \quad \tau \leq 0
\end{flalign}

Furthermore, the recovered population under this parametrization is
\begin{equation}\label{eq:rim}
r = g W_{\pm}\left(- e^{\frac{i-i_m}{g}-1} \right)  -g W_{-}\left(- e^{\frac{-i_m}{g}-1} \right)-i 
\end{equation}
under the same choice of origin. 

\subsection{Initial value parametrization}\label{sec:ivp}

As customarily accepted, the SIR model can be recast as an initial value problem.
The indeterminate constant $c$ can be eliminated using the initial condition 
 \[
i_0 = -s_0 + g \log{s_0} +c
\]
Therefore,
\begin{equation}
 i=i_0 +s_0  -s +g \log{s/s_0} = 1 - s +g \log{\frac{s}{1-i_0}}
\end{equation}
For this case, the following autonomous differential equation can be formulated:
\begin{equation}\label{eq:diffi}
\dot{i}= - g i \left( W_{\pm} \left( - \frac{1-i_0}{g} e^{\frac{i-1}{g}} \right)  +1\right) 
\end{equation}
This can be solved implicitly by separation of variables as
\begin{flalign}
-\int_{i_0}^{i} \frac{ d z }{z \left( W_{-} \left( - \frac{1-i_0}{g} e^{\frac{z-1}{g}} \right) +1 \right) } = g \tau, \quad \tau \leq t_m \\
-\int_{i_0}^{i} \frac{ d z }{z \left( W_{+} \left( - \frac{1-i_0}{g} e^{\frac{z-1}{g}} \right) +1 \right) } = g \tau, \quad \tau > t_m
\end{flalign}
It is noteworthy that the time to the peak of infections $t_m$ can be calculated as 
\[
t_m=\int\limits_{0}^{\log{g/s_0}} \frac{du}{ s_0 e^u  - g u-(s_0+i_0)  }
\]
The result follows by considering the autonomous system \ref{eq:dids} and fixing the upper terminal of integration $s=g$.
However, by Prop. \ref{prop:int3} this definite integral can be evaluated only numerically. 

 \section{Is the incidence function "new"?}\label{sec:louv}
 
 The incidence \textit{i}-function of the SIR model appears to be an interesting object of study on its won. 
 One may pose the question about the representation of this function by other, possibly, elementary functions.
 The answer to this question is in the negative as will be demonstrated below.
 In precise terms, this function is non-Liouvillian. 
 However, this does not mean that the function can not be well approximated. Fortunately, this is the case as the \textit{i}-function can be approximated for a sufficiently wide domain of parameter values by Newtonian iteration.  
 \begin{definition}\label{def:elementary} An elementary function is defined as a function built from a finite
 number of combinations and compositions of algebraic, exponential and logarithm  functions under algebraic operations (+,-,., /)
  \end{definition}
  Allowing for the underlying field to be complex numbers -- $\mathbb{C}$, trigonometric functions become elementary as well.
 \begin{definition}[Liouvillian function]
 	We say that $f(x)$ is a Liouvillian function if it lies in some Liouvillian extension of $\left(C(x), ^\prime \right) $
 	for some constant field \textit{C}. 
 \end{definition}
As a first point we establish the non-elementary character of the integral in eq. \ref{eq:invi}. The necessary introduction to the theory of differential fields is given in the Appendix \ref{sec:diffields}.
From the work of Liouville it is known that if a function
$ F (x)= f(x) e^{q(x)}$, where $f$, $q$ are elementary functions, has an elementary anti-derivative of the form \cite{Rosenlight1969}
\[
\int F(x)dx = \int f e^q dx = h e^q
\]
for some elementary function $h(x)$ \cite{Conard2005}.
Therefore, differentiating we obtain
\[
f e^q= h^{\prime} e^q + h q^{\prime} e^q 
\]
so that if $e^q\neq 0$
\[
h^{\prime} + h q^{\prime}  =f
\]
holds. The claim can be strengthened to demand that $h$ be algebraic for algebraic $f$ and $q$.
\begin{theorem}\label{th:invinonl}
	The integrals 
	\[
 	I_{\pm} (\xi)=	\int\frac{ d \xi}{ \xi   \left( W_{\pm} \left(-  \frac{e^\frac{\xi-c}{g}}{g} \right) +1 \right) }
	\]
	are not Liouvillian.
\end{theorem}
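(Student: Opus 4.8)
The plan is to pass to the differential field in which the integrand of $I_{\pm}$ is a visibly rational object and to apply Liouville's theorem there. Write $D=d/d\xi$ and $Y:=W_{\pm}(-e^{(\xi-c)/g}/g)$. Since a differential field is closed under its derivation, $I_{\pm}$ is Liouvillian over $\mathbb{C}(\xi)$ if and only if $I_{\pm}'=\dfrac{1}{\xi\,(Y+1)}$ is, and --- as $\xi\in\mathbb{C}(\xi)$ --- if and only if $Y$ itself is a Liouvillian function of $\xi$; so the theorem is really the assertion that the Lambert branch $Y$ is not Liouvillian. Differentiating $Ye^{Y}=-e^{(\xi-c)/g}/g$ gives $Y'=\dfrac{Y}{g(1+Y)}$, and with $s:=-gY$, $\sigma:=\log s$ one reads off the identity $\xi=g\sigma-e^{\sigma}+c$. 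Hence the field generated over $\mathbb{C}(\xi)$ by $Y$ and by a logarithm of $-gY$ is $F:=\mathbb{C}(\sigma,e^{\sigma})$; it contains $\mathbb{C}(\xi)$, its field of constants is $\mathbb{C}$, and the integrand becomes $\psi:=\dfrac{g}{(e^{\sigma}-g\sigma-c)(e^{\sigma}-g)}=\dfrac{1}{\xi\,(Y+1)}=I_{\pm}'\in F$.

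Suppose, for contradiction, that $I_{\pm}$ lies in a Liouvillian extension $M$ of $\mathbb{C}(\xi)$; one may assume the constants of $M$ are again $\mathbb{C}$. Then $\psi=I_{\pm}'\in M$, so $Y$, $s$ and $\sigma=\log s$ all lie in a Liouvillian extension $M'\supseteq F$ of $\mathbb{C}(\xi)$, and by the usual compositum argument $M'$ is Liouvillian over $F$ with the same constant field. Invoking the Liouville--Rosenlicht theorem --- in the form valid for Liouvillian (not merely elementary) antiderivatives over a field with algebraically closed constants; the needed differential-field background is in Appendix~\ref{sec:diffields} --- we obtain constants $c_{1},\dots,c_{N}\in\mathbb{C}$ and elements $v_{0},v_{1},\dots,v_{N}\in F$ such that $\psi=v_{0}'+\sum_{i}c_{i}\,v_{i}'/v_{i}$.

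I would extract the contradiction by localizing this identity at the prime divisor $\mathfrak{q}$ of $F=\mathbb{C}(\sigma)(e^{\sigma})$ cut out by the degree-one (hence irreducible) polynomial $e^{\sigma}-g\sigma-c$. A uniformizer there is $t=e^{\sigma}-g\sigma-c=-\xi$, so $t'=-1$; thus near $\mathfrak{q}$ the derivation $D$ is simply differentiation with respect to the local parameter $\xi$, and the standard behaviour of orders under differentiation applies. In particular $v_{0}'$ is the derivative of an element, so $\operatorname{res}_{\mathfrak{q}}(v_{0}')=0$; and writing $v_{i}=t^{n_{i}}w_{i}$ with $w_{i}$ a unit at $\mathfrak{q}$ gives $\operatorname{res}_{\mathfrak{q}}(v_{i}'/v_{i})=n_{i}\operatorname{res}_{\mathfrak{q}}(t'/t)=-n_{i}$, a rational integer. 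On the other hand $\psi$ has a simple pole at $\mathfrak{q}$ with residue $\dfrac{g}{g\sigma+c-g}$, a \emph{non-constant} element of $\mathbb{C}(\sigma)$ since $g\neq0$. Equating residues at $\mathfrak{q}$ forces $\dfrac{g}{g\sigma+c-g}=-\sum_{i}c_{i}n_{i}\in\mathbb{C}$, which is absurd. Therefore $I_{\pm}$ is not Liouvillian.

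The residue count itself is routine; I expect the two genuinely delicate points to be the following. First, the reparametrization $\xi\leftrightarrow\sigma$ must be used strictly in one direction --- to push a \emph{hypothetical} Liouvillian representation of $I_{\pm}$ down into the concrete field $F$ --- because the inverse map $\xi\mapsto s$ is exactly the object whose Liouvillianity is being denied; so the compositum step and the normalisation of constant fields have to be argued rather than assumed. Second, and more essential, one has to use the extension of Liouville's theorem to Liouvillian antiderivatives rather than just its classical elementary case (the elementary case would only yield non-elementarity of $I_{\pm}$). It is precisely to make that step clean that one passes from $Y$ to $\sigma=\log s$: working directly with $Y$ is awkward because $Y'=Y/(g(1+Y))$ makes $D$ singular at $Y=-1$ --- the epidemic peak $i_{m}$ --- which is the very place carrying the obstruction, whereas after the substitution that place becomes $\mathfrak{q}$, where $t'=-1$ and everything is standard.
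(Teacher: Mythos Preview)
Your opening reduction --- that $I_{\pm}$ is Liouvillian over $\mathbb{C}(\xi)$ if and only if $Y=W_{\pm}\bigl(-e^{(\xi-c)/g}/g\bigr)$ is --- is correct and is the cleanest framing of the claim; the passage to $\sigma$ and the identification $F=\mathbb{C}(\sigma,e^{\sigma})=\mathbb{C}(\xi,\sigma)$ are also fine. The argument breaks at precisely the step you yourself flag as ``more essential'': there is \emph{no} Liouville--Rosenlicht theorem ``in the form valid for Liouvillian antiderivatives'' delivering $\psi=v_{0}'+\sum_{i}c_{i}\,v_{i}'/v_{i}$ with $v_{i}\in F$. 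That decomposition characterises exactly those $\psi\in F$ admitting an \emph{elementary} antiderivative over $F$. For Liouvillian antiderivatives no obstruction of this shape can exist, because adjoining an integral is itself one of the three permitted Liouvillian extension steps: $\int\psi$ always lies in the Liouvillian extension $F\bigl(\int\psi\bigr)$ of $F$, for \emph{every} $\psi\in F$. Hence the hypothesis ``$I_{\pm}$ lies in a Liouvillian extension of $F$'' is vacuous and cannot force the Liouville form. Your residue computation at $\mathfrak{q}$ would at best establish that $\psi$ has no elementary primitive over $F$, which is not the assertion.

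For comparison, the paper proceeds differently: two explicit substitutions bring the integral to the shape $\int A e^{z}\bigl(e^{e^{z}}-Ae^{z}\bigr)^{-1}\,dz$, after which the algebraic Liouville criterion (Corollary~\ref{th:algint}) is invoked. Note that this argument, as written, also concludes only that ``the integrand has no elementary antiderivative,'' so the elementary--versus--Liouvillian gap you correctly identified is present there as well. The clean route to the full non-Liouvillian statement is the one your first paragraph already sets up: having reduced the question to the non-Liouvillianness of $Y$, observe that $\xi\mapsto -e^{(\xi-c)/g}/g$ is elementary with elementary inverse $w\mapsto c+g\log(-gw)$, so $Y$ Liouvillian over $\mathbb{C}(\xi)$ would force $W$ Liouvillian over $\mathbb{C}(x)$, contradicting the result of \cite{Bronstein2008} recorded in Appendix~\ref{sec:lambert}.
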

\begin{proof}
We use $i_m$ parametrization. Let $c= i_m+ g - g \log{g}$.
The proof proceeds by change of variables -- first $\xi= g \log{y}-y g+g+i_m $; followed by
$ z=\log{((g \log{y}+i_m+g)/g)}$.
\begin{multline*}
I=	\int\frac{ d \xi}{\xi   \left( W_{\pm} \left(-   {e^{\frac{\xi-i_m}{g}-1}}  \right) +1 \right) }
=\int \frac{y-1}{y \left( g \log(y)-g y +i_m+g \right) \left(  W\left(-y e^{-y} \right)+1\right)  } dy  \\
=- \int \frac{dy}{y (g \log(y)-g y +i_m+g )   }=\frac{1}{g}
\int {\left. \frac{{{ e}^{z+\frac{{i_m}}{g}+1}}}{{{e}^{{{ e}^{z}}}}-{{ e}^{z+\frac{{i_m}}{g}+1}}}dz\right.}
\end{multline*}
since $W_{\pm}\left(-y e^{-y} \right)=-y$.
The last integral has the form
\[
 \int \frac{A e^z}{e^{e^z}- A e^z} dz   
\]
which allows for the application of the Liouville theorem in the form of Corr. \ref{th:algint}.
We can identify
\[
\int f e^z dx = h e^z, \quad f(z)=\frac{A}{e^{e^z}- A e^z}, \quad A= e^{\frac{i_m}{g}+1}
\]
so that 
\[
\frac{A}{e^{e^z}- A e^z} = h^\prime (z) +h (z)
\]
for some unknown algebraic $h(z)$.
Since the left-hand side of the equation is transcendental in \textit{z} so is the right-hand side.
Therefore,  the integrand has no elementary antiderivative. 
\end{proof}
The proof establishes also the validity  of  two additional propositions:
\begin{proposition}\label{prop:int2}
	The integral 
	\[
	I= \int \frac{dy}{y \left( g \log(y)-g y + c \right)   } 
	\]
	is not Liouvillian.
\end{proposition}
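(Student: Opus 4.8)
The plan is to piggyback on the chain of substitutions already carried out in the proof of Theorem \ref{th:invinonl}. There the integral $I_{\pm}(\xi)$ was transformed, via $\xi = g\log y - gy + g + i_m$ and then $z = \log\big((g\log y + i_m + g)/g\big)$, into an integral of the form $\int \frac{A e^z}{e^{e^z}-A e^z}\,dz$, and in the course of that reduction the intermediate integral $-\int \frac{dy}{y(g\log y - gy + i_m + g)}$ appeared. Since $i_m = c - g\log g - g$ is precisely the relabelling of the free constant $c$ used in that proof, the integral in the present proposition is, up to sign and the trivial renaming of the constant, exactly this intermediate integral. So first I would simply record this identification: with $c \leftrightarrow i_m + g - g\log g$ we have $I = -\,\int \frac{dy}{y(g\log y - gy + i_m + g)}$, and then apply the substitution $z = \log\big((g\log y + i_m + g)/g\big)$ to bring $I$ into the canonical form $\frac{1}{g}\int \frac{A e^z}{e^{e^z}-A e^z}\,dz$ with $A = e^{i_m/g + 1}$.

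Second, I would invoke the Liouville–Rosenlicht mechanism exactly as in Theorem \ref{th:invinonl}: if this last integrand had a Liouvillian (equivalently here, elementary, after the strengthening to algebraic $h$ noted before the theorem) antiderivative, it would be of the form $h\,e^z$ for algebraic $h(z)$, forcing $h'(z) + h(z) = \dfrac{A}{e^{e^z}-A e^z}$. The right-hand side is transcendental over $C(z)$ (it involves the tower $e^{e^z}$), so no algebraic $h$ can satisfy this identity; hence the antiderivative is not elementary, and since the integrand is built from elementary functions, Corollary \ref{th:algint} (the algebraic-$h$ form of Liouville's theorem) shows it is not Liouvillian. Transcendence of a Liouvillian function is preserved under the algebraic change of variables $y \mapsto z$, so $I$ itself is not Liouvillian.

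The only genuinely delicate point — and the one I would state carefully rather than wave at — is the transcendence claim for $\dfrac{A}{e^{e^z}-A e^z}$ over the differential field generated by the elementary functions of $z$: one must check that $e^{e^z}$ is genuinely transcendental over $C(z, e^z)$, i.e. that $e^z$ is not a logarithm of an algebraic function of the data already present, so that the tower $C \subset C(z) \subset C(z,e^z) \subset C(z,e^z,e^{e^z})$ is a strictly transcendental Liouvillian tower. This is standard (it is the same structural fact underpinning the non-elementarity of $\int e^{e^z}dz$), and it is exactly the fact already used implicitly in the proof of Theorem \ref{th:invinonl}; I would simply cite that step. Everything else is the bookkeeping of the two substitutions, which I would not reproduce in full since it is verbatim the computation displayed in the preceding proof.
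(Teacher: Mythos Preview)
Your proposal is correct and follows exactly the route the paper takes: the paper does not give a separate proof for this proposition but simply remarks that ``the proof [of Theorem \ref{th:invinonl}] establishes also the validity'' of it, since the integral here is precisely the intermediate integral $-\int \frac{dy}{y(g\log y - gy + i_m + g)}$ appearing in that chain of substitutions. One small bookkeeping slip: the identification of constants should read $c \leftrightarrow i_m + g$ (the constant appearing in the intermediate $y$-integral), not $i_m + g - g\log g$ (which is the value assigned to the original $c$ in the $\xi$-integral); this is harmless since $i_m$ is a free parameter anyway.
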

\begin{proposition}\label{prop:int3}
	The integral 
	\[
	I= \int \frac{dy}{y+c -e^y   } 
	\]
	is not Liouvillian.
\end{proposition}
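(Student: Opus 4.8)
The plan is to bring this integral, via an elementary change of variable, into exactly the shape $\int A e^{z}/(e^{e^{z}}-A e^{z})\,dz$ that was already shown to have no elementary antiderivative inside the proof of Theorem~\ref{th:invinonl}. First I would restrict attention to an interval on which $y+c>0$ (for instance $y>-c$); this is harmless, since being Liouvillian is a local, field-theoretic property and is insensitive to the choice of such an interval. On that interval set $z=\log(y+c)$, so that $y=e^{z}-c$ and $dy=e^{z}\,dz$, and the denominator becomes
\[
y+c-e^{y}=e^{z}-e^{e^{z}-c}=e^{z}-e^{-c}e^{e^{z}} .
\]
Hence
\[
I=\int\frac{e^{z}}{e^{z}-e^{-c}e^{e^{z}}}\,dz=-\int\frac{e^{c}e^{z}}{e^{e^{z}}-e^{c}e^{z}}\,dz ,
\]
which is an integral of the form $\int f e^{z}\,dz$ with $A=e^{c}$ and $f(z)=-e^{c}/(e^{e^{z}}-e^{c}e^{z})$, an elementary function of $z$.

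From here I would argue exactly as in Theorem~\ref{th:invinonl}: apply the Liouville theorem in the algebraic form of Corr.~\ref{th:algint}. If $I$ admitted an elementary antiderivative it would have to be of the form $h\,e^{z}$ with $h$ algebraic, which forces $h'+h=f$; but $f$ is transcendental in $z$ because of the double exponential $e^{e^{z}}$ in its denominator, and an algebraic function of $z$ cannot cancel it, a contradiction. Consequently $I$ has no elementary antiderivative, and the conclusion that $I$ is not Liouvillian then follows in the same manner as for Theorem~\ref{th:invinonl}.

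A shorter route is to observe that Proposition~\ref{prop:int3} is merely Proposition~\ref{prop:int2} seen through a change of variable: substituting $u=\log y$ turns $\int\frac{dy}{y\left(g\log y-g y+c\right)}$ into $\frac{1}{g}\int\frac{du}{u-e^{u}+c/g}$, so the integrand of Proposition~\ref{prop:int3} with constant $c$ coincides, up to the harmless constant factor $1/g$, with that of Proposition~\ref{prop:int2} with constant $g c$ after an elementary substitution; since the class of integrands possessing Liouvillian antiderivatives is stable under such substitutions, nothing further is needed. I do not anticipate a genuine obstacle here: the only points requiring care are fixing a domain on which the logarithms are defined, and the transcendence step — that no algebraic function of $z$ can absorb $e^{e^{z}}$ — which is precisely the observation already used in the proof of Theorem~\ref{th:invinonl}.
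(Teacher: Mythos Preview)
Your ``shorter route'' is precisely the paper's own proof: the paper simply performs the change of variables $y=e^{x}$ in the Proposition~\ref{prop:int2} integral (with $g=1$), obtaining $-\int dx/(e^{x}-x-c)$, which is the integrand of Proposition~\ref{prop:int3} up to sign and renaming. Your substitution $u=\log y$ is the same map read backwards, so the two arguments coincide.

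Your first route is a valid and slightly different alternative: instead of passing through Proposition~\ref{prop:int2}, the substitution $z=\log(y+c)$ sends the integrand directly to the double-exponential form $\int Ae^{z}/(e^{e^{z}}-Ae^{z})\,dz$ with $A=e^{c}$, after which Corollary~\ref{th:algint} applies verbatim as in Theorem~\ref{th:invinonl}. The paper effectively reaches this same endpoint in two hops (Prop.~\ref{prop:int3} $\to$ Prop.~\ref{prop:int2} $\to$ Theorem~\ref{th:invinonl}); your first approach collapses those hops into one. Neither route buys anything substantive over the other---they are the same chain of elementary substitutions, just composed differently.
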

\begin{proof}
	By change of variables $y=e^x$
	\[
		I= \int \frac{dy}{y \left(  \log(y)- y +  c \right)   }  =
		- \int \frac{dx}{e^x-x-c}
	\]
\end{proof}

 \begin{theorem}\label{th:inonl}
 	The incidence function $i(t)$, defined by the differential eq. \ref{eq:isr}, is not Liouvillian.
 \end{theorem}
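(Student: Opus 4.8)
The plan is to suppose $i(\tau)$ is Liouvillian and derive a contradiction with Proposition~\ref{prop:int2} (Theorem~\ref{th:invinonl} would serve equally well). The first step is to trade $i$ for the susceptible variable $s$, in which the transcendental function $W$ disappears. From \ref{eq:ds1} we have $(\log s)^{\prime}=\dot s/s=-i$, so $\log s=-\int i\,d\tau$ up to an additive constant; if $i$ is Liouvillian, then adjoining this primitive and then its exponential keeps us inside a single Liouvillian extension $L$ of $\mathbb{C}(\tau)$, which thus contains $\log s$, $s$ and $r$. Substituting $i=-s+g\log s+c$ into \ref{eq:ds1} shows that $s$ satisfies the autonomous equation
\[
\dot s=\Psi(s),\qquad \Psi(s)=s\,(s-g\log s-c),
\]
whose right-hand side is \emph{elementary} — this is the whole point of passing to $s$, since the autonomous equation \ref{eq:isr} for $i$ itself still contains $W$.

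The core of the proof would be the lemma that a Liouvillian solution $s(\tau)$ of an autonomous equation $\dot s=\Psi(s)$, with $1/\Psi$ lying in an elementary (hence Liouvillian) extension of $\mathbb{C}(s)$, forces the time function $\tau(s)=\int ds/\Psi(s)$ to be a Liouvillian function of $s$. I would prove it by transporting a Liouvillian tower. Take a tower $\mathbb{C}(\tau)=F_0\subset\cdots\subset F_n=L$ for $s$, adjoin $s$ and $\log s$ to every $F_k$ (which leaves every step Liouvillian and keeps the top field equal to $L$), and replace the derivation $d/d\tau$ by $\delta=\tfrac{1}{\Psi(s)}\,\tfrac{d}{d\tau}$. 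Because $1/\Psi(s)=1/\bigl(s(s-g\log s-c)\bigr)$ already sits in $\mathbb{C}(s,\log s)$, multiplying the derivation by it sends each step of the tower (algebraic, primitive, or exponential of a primitive) to a step of the same type for $\delta$; prepending the two genuine $\delta$-steps $\mathbb{C}(s)\subset\mathbb{C}(s,\log s)\subset\mathbb{C}(s,\log s,\tau)$ — for which $\delta(\log s)=1/s$ and $\delta\tau=1/\Psi(s)$ — then exhibits $\tau$, hence $\int ds/\Psi(s)$, inside a Liouvillian extension of $\mathbb{C}(s)$.

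To close the loop, $\int ds/\Psi(s)=\int\frac{ds}{s(s-g\log s-c)}=-\int\frac{ds}{s(g\log s-s+c)}$ becomes, under the substitution $s=gu$, precisely the integral $\int\frac{du}{u(g\log u-gu+c^{\prime})}$ treated in Proposition~\ref{prop:int2}, which is not Liouvillian; this contradicts the previous paragraph, so $i(\tau)$ — and hence $i(t)$, which differs only by the rescaling $\tau=\beta t$ — is not Liouvillian. Routing the argument through $s$ also has the incidental advantage that $s(\tau)$ is strictly monotone ($\dot s=-si<0$), so $\tau(s)$ is single-valued and no $W$-branch bookkeeping is required; the same scheme run through \ref{eq:invi} and Theorem~\ref{th:invinonl} works as well, at the cost of carrying the two branches of \ref{eq:invip}.

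The step I expect to be the main obstacle is precisely the change-of-derivation transport in the middle paragraph: one multiplies the derivation by $1/\Psi(s)$, a function lying not in the ground field $\mathbb{C}(\tau)$ but only in the enlarged field $\mathbb{C}(\tau,s,\log s)$, so $s$ and $\log s$ must be adjoined throughout beforehand, and it is essential that $1/\Psi$ belong to an elementary extension of $\mathbb{C}(s)$; checking that every tower step indeed survives this substitution is where the differential-field formalism set up in Appendix~\ref{sec:diffields} is needed.
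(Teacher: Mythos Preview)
Your argument is correct and takes a genuinely different route from the paper. The paper works directly with the $i$-equation: after the substitution $i-i_m-1=\log u-u$ (which, in view of the first integral $i=-s+g\log s+c$, is secretly $u=s$), it runs a three-case split --- $u$ algebraic, $u=e^{f}$ with $f$ algebraic, and $i$ itself algebraic --- and in each case extracts a contradiction from transcendence considerations. Your approach instead reduces the claim to Proposition~\ref{prop:int2}: if $i$ were Liouvillian then so would $s$ (one primitive for $\log s=-\int i\,d\tau$, one exponential), and since $s$ obeys the autonomous equation $\dot s=\Psi(s)$ with $\Psi(s)=s(s-g\log s-c)$ elementary in $s$, rescaling the derivation by $1/\Psi(s)$ transports the whole Liouvillian tower to one over $(\mathbb{C}(s),d/ds)$ that contains $\tau=\int ds/\Psi(s)$, contradicting Proposition~\ref{prop:int2}. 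This is precisely a Liouvillian upgrade of the Prelle--Singer remark the paper makes immediately after the theorem (where it notes that their result handles the elementary case but leaves the Liouvillian case open); your route is more systematic and makes the dependence on the non-Liouvillian integral explicit, whereas the paper's case split is shorter but, as written, does not transparently exhaust all Liouvillian $u$ (a Liouvillian element need not sit at height one in the tower). The change-of-derivation step you flag is indeed the only real work; your device of adjoining $s$ and $\log s$ throughout before multiplying the derivation by $1/\Psi(s)\in\mathbb{C}(s,\log s)$ is exactly what is needed, and checking that algebraic, primitive, and exponential steps survive this rescaling is routine once $\Psi(s)$ lies in every intermediate field.
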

\begin{proof}
For the present case, let us assume that $i(0)=i_m$ so that \textit{i} attains the maximum by Prop \ref{prop:imax}.
Therefore,
\[
i^\prime = - g i \left(W\left(-e^{\frac{i-i_m}{g}-1} \right) +1 \right) 
\]
Without loss of generality let $g=1$, which amounts to scaling of the solution by the factor of $1/g$.

Suppose further that 
\[
i-i_m-1= \log{u} - u
\]
for some algebraic function $u$ (log-extension case).
Then
\[
-\frac{i^\prime}{i} = W\left(-e^{\log{u}-u} \right)+1 =-u+1 
\]
On the other hand,
\[
- \frac{i^\prime}{i}= 
\log{\left( \log{u} - u+i_m+1\right)}^\prime =
	-\frac{1-u}{u \left(\log{u} - u+i_m+1\right) } u^\prime
\]
so that 
\[
-\frac{1-u}{u \left(\log{u} - u+i_m+1\right) } = \frac{1-u}{u^\prime}
\]
However, if $u$ is algebraic so is $ u^\prime$ by Th. \ref{th:elemdiff}. 
Therefore, we have a contradiction, since the left-hand-side is transcendental.
Hence, $i$ is not part of a logarithmic  elementary extension. 

Suppose that $u$ is exponential, i. e. $u=e^f$ for some algebraic function $f$.
In this case,
\[
\frac{\left( {{ e}^{f}}-1\right) \, f^\prime }{-{{ e}^{f}}+f+i_m+1}=1-{{ e}^{f}} \rightarrow -f^\prime= -{{ e}^{f}}+f+i_m+1   
\]
However, 
\[
f^\prime +f+i_m+1   ={{ e}^{f}}
\]
Therefore, the right-hand-side is exponential and can not be algebraic as it is demanded by Corr \ref{th:algint}.
This is a contradiction, hence $f$ can not be algebraic, hence $u$ is not part of an exponential  elementary extension. 

Finally, suppose that  $i(t)$ is algebraic. 
Since the Wright function $\Omega (z)$ is transcendental \cite{Corless2002} it follows that $W\left(-e^{i-i_m-1} \right)= \Omega \left(i-i_m-1 +i \pi  \right)  $ can not be algebraic in \textit{i}. 
Therefore, $i^\prime/i$ and hence  \textit{i} must be transcendental.
Hence, the case of an algebraic $i(t)$ can not hold either.

In summary all three cases are rejected, therefore, \textit{i} is not Liouvillian. 
\end{proof}
\begin{remark}
	Prelle and Singer \cite{Prelle1983} prove in Corollary 3 that if the autonomous system $y^\prime = f(y)$ has an elementary first integral then 
	\[
	 g(y)=\int \frac{dy}{f(y)}
	\]
	is also elementary. 
	This presents a direct way of proving that the i-function is non elementary by virtue of Th. \ref{th:invinonl} but leaves open the question about the non-Liouvilian character of the i-function.
\end{remark}

\section{Series solutions}\label{sec:series}
We will give two series for the \textit{i}-function in view of the different re-parametrizations of the SIR model.

\subsection{Series for thr $i_m$ parametrization}\label{sec:taylorim}
 
The natural parametrization is fixing the peak at the origin.
The Taylor development can be computed as follows:
\begin{equation}
i (t)= {i_m}-\frac{{{{i_m}}^{2}} g}{2}{t}^{2}+\frac{{{{i_m}}^{3}} g\, }{6}{{t}^{3}}+\frac{\left( 4 {{{i_m}}^{3}} {{g}^{2}}-{{{i_m}}^{4}} g\right)  }{24}{{t}^{4}}-\frac{\left( 15 {{{i_m}}^{4}} {{g}^{2}}-{{{i_m}}^{5}} g\right)  }{120}{{t}^{5}}+ \ldots
\end{equation}
and for the logarithm
\begin{equation}
\log{i(t)}=\log{{i_m} }-\frac{{i_m} g}{2} {{t}^{2}}+\frac{{{{i_m}}^{2}} g }{6}{{t}^{3}}+\frac{\left( {{{i_m}}^{2}} {{g}^{2}}-{{{i_m}}^{3}} g\right)  }{24}{{t}^{4}}-\frac{\left( 5 {{{i_m}}^{3}} {{g}^{2}}-{{{i_m}}^{4}} g\right)  }{120}{{t}^{5}}+ \ldots
\end{equation}

\subsection{Series for the $i_0$-parametrization}\label{sec:taylori0}
The Taylor series starting from an initial value $i_0$ is
\begin{multline}
i(t) = {i_0}-{i_0} \left( {i_0}+g-1\right) t +\frac{g\, {i_0} \left( 4 {{{i_0}}^{2}}+5 g\, {i_0}-7 {i_0}-3 g+3\right) }{2 \left( {i_0}-1\right) } t^2 \\
-\frac{g\, {i_0} \left( 5 {{{i_0}}^{3}}+21 g\, {{{i_0}}^{2}}-9 {{{i_0}}^{2}}+18 {{g}^{2}}\, {i_0}-26 g\, {i_0}+4 {i_0}-10 {{g}^{2}}+7 g\right) }{6 \left( {i_0}-1\right)  }t^3 +\ldots
\end{multline}

The Taylor series for the logarithm starting from an  initial value $i_0$ is
\begin{multline}
\log{i(t)} =\log{\left( {i_0}\right) }-\left( {i_0}+g-1\right) t +\frac{g\, {i_0} \left( {i_0}+2 g-1\right) }{2 \left( {i_0}-1\right) } t^2 \\
 -\frac{g\, {i_0} \left( {i_0}+2 g-1\right)  \left( 2 {i_0}+g-1\right) }{6 \left( {i_0}-1\right) } t^3 + \ldots
\end{multline}
The series follow directly from successive differentiation of the differential equation \ref{eq:diffi}.

\section{Numerical approximation}\label{sec:numer}
 
The i-function can be efficiently approximated by the Newton's method.
The Newton iteration scheme is given as follows for the c-parametrization: 
 \[
 i_{n+1}=i_n-i_n\, \left( {W}_{\pm}\left( -\frac{{{ e}^{\frac{i_n-c}{g}
 }}}{g}\right) +1\right) \, \left( \int_{ g  \log{g} -g +c}^{i_n}{\left. \frac{d \xi}{\xi\, \left( {W}_{\pm}\left( -\frac{{{ e}^{\frac{\xi-c}{g}}}}{g}\right) +1\right) }\right.}+g t\right) 
 \]
 This is a conceptually simple representation. However, it has the disadvantage of using the Lambert function for the quadrature routine. 
 Another equivalent representation is
 \[i_{n+1}=i_n+g i_n\, \left( {W}_{\pm}\left( -\frac{{{ e}^{\frac{i_n-c}{g}}}}{g}\right) +1\right) \, \left( t-\int_{g}^{-g {W}_{\pm}\left( -\frac{{{ e}^{\frac{i_n-c}{g}}}}{g}\right) }{\left. \frac{dy}{y\, \left( g \log{y}-y+c\right) }\right.}\right) \]
(see Prop. \ref{prop:chvar}).
 This form has the advantage of requiring only 1 Lambert function evaluation per iteration.
 
 A point of attention here is the choice of the initial value for the iteration scheme. 
 Despite my best efforts, a rigorous analytical asymptotic valid on the entire real line and for all parameter values could not be found.  
 Numerical experiments gave acceptable results using the formula
 \[
 f(x) = b \, e^{1-x c - e^{-x c}}
 \]
 $g > 0, \ c = \sqrt{2 b g } $ or $ c = 2 \sqrt{ b g/e }$,
 and additionally 
 $c \leftarrow c /\sqrt{e}, \, x>0$
 for the initial value of $i_0 = f(t)$.

 \begin{figure}[h!]
 	
 	\begin{tabular}{ll}
 		A & B \\
 		\includegraphics[width=0.5\linewidth]{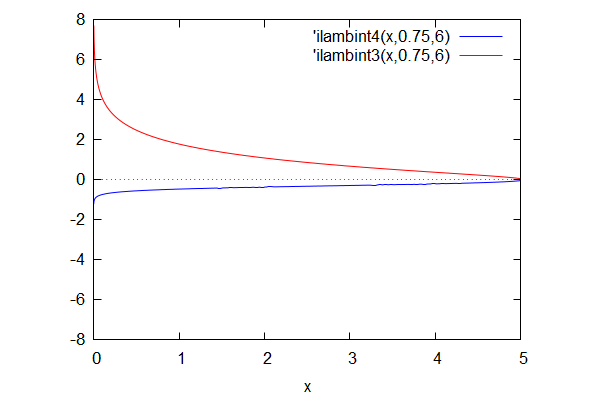} &
 		\includegraphics[width=0.5\linewidth]{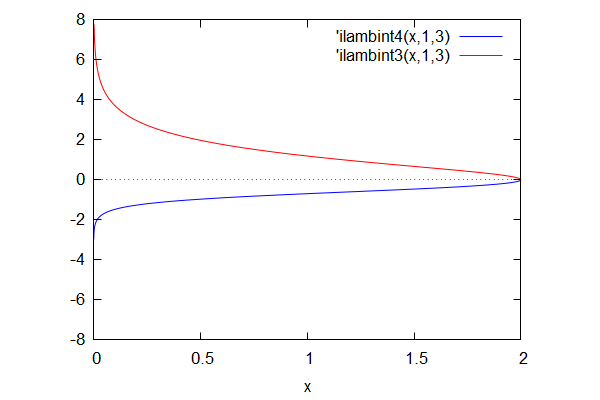}
 	\end{tabular}
 	
 	\caption{Plots of the integrals $I_{\pm} (x)$  }
 	\label{fig:ilambint0756}
 	A -- c-parametrization with parameters g= 0.75, c = 6.0; B -- c-parametrization with parameters g=1.0, c= 3.0. 
 	The negative branch is below 0.
 \end{figure}
 
 \begin{figure}[h!]
 	\centering
 	\includegraphics[width=0.8\linewidth]{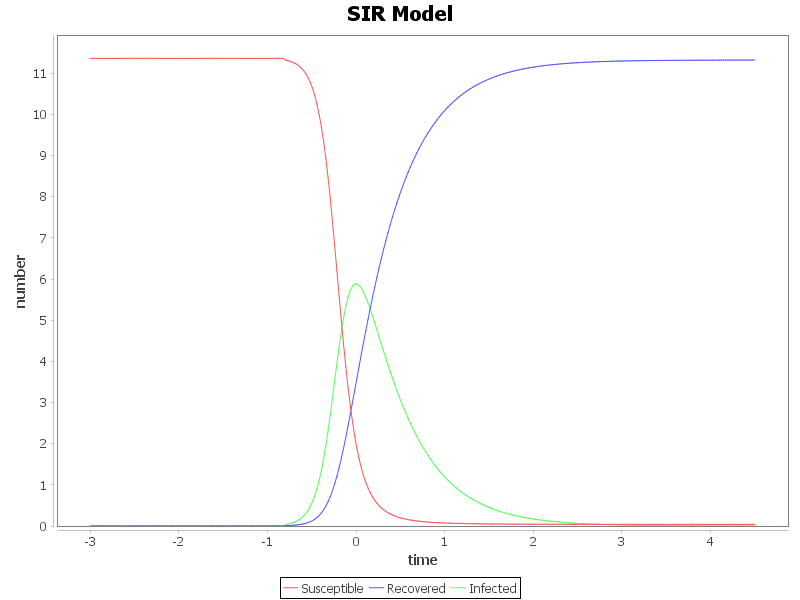}
 	\caption{The SIR model variables as functions of time}
 	\label{fig:sirchart3}
 	The instance is parametrized by $i_m=6.5$, $g=2.0$.
 \end{figure}

 \subsection{Plots}\label{sec:plots}
 
 Plots of the branches of the integral $I(x)$ (Fig. \ref{fig:ilambint0756}) were obtained by direct numerical integration using the QUADPACK \cite{Piessens1983} routines in the Computer Algebra System Maxima. 
 Plots of the SIR model (Fig. \ref{fig:sirchart3}) were obtained using a Java routine \cite{Prodanov2020} implementing the double exponential integration method \cite{Mori1985,Mori2001}.
 Both methods turn out to be suitable for the numerical integration problem. 
 
 \section{Datasets}\label{sec:data}
 The COVID datasets were downloaded from the European Centre for Disease Prevention and Control (ECDC) website: 
 \url{https://opendata.ecdc.europa.eu/covid19/casedistribution/csv}.
The downloadable data file is updated daily and contains the latest available public data on COVID-19. Each row/entry contains the number of new cases reported per day and per country.
The data collection policy is available from \url{https://www.ecdc.europa.eu/en/covid-19/data-collection}.  

 \section{Data analysis}\label{sec:app}
  
 \subsection{Processing}\label{sec:proc}
 The data were imported in the SQLite \url{https://www.sqlite.org} database and further filtered by country transferred to MATLAB for parametric fitting using native routines.
 Fitted parameters were stored in the same database. 
  
 \subsection{Parametric fitting}\label{sec:fit}
 The parametric fitting was conducted using the  \textsf{fminsearchbnd} routine, which allows for constrained optimization.
 To reduce the impact of the fluctuation in the weekly reporting of data the parametric fitting procedure was applied first to 7 day moving average of the time series.   
 The fitting algorithm is exemplified with datasets from Belgium, the Netwherlands, Italy, Germany and Bulgaria for the period Jan 2020 -- Jun 2020. 
 The fitting equation is given by 
 \[
 I_t \sim N I \left( t/10.0 - T | g, i_m \right) 
 \]
 where $I_t$ is the observed incidence or mortality, respectively, while $N, T, g$ and $i_m$ are estimated from the data.
 For numerical stability reasons the time during the fitting procedure is rescaled by a factor of 10. 
   
 \section{Results}\label{sec:results}

  The observed case fatality represented a parameter, which could be established with more confidence in the beginning of the pandemic due to the lack of testing and the non-specificity of the clinical signs of COVID-19. 
  Hence, it was the primary target of the parametric fitting.

  \subsection{Analysis of case fatality data}\label{sec:mortality}
  
  The data fitting procedure is illustrated with the case fatality data of Belgium and are presented in Fig. \ref{fig:belgium}.
  The $T$, $N$ and $i_m$ are estimated from the observed data. 
  For the $g$ parameter an initial estimate of 0.75 is used (i.e. $R_0=1.33$). 
  The intermediate parameters are initially estimated on the 3-day moving average data. 
  The final fit was performed on the raw data, using the intermediate parameters as initial values. 
  The cumulative mortality data were estimated from eq. \ref{eq:rim}.
  As can be appreciated from Fig. \ref{fig:belgium} fluctuations in reporting did not have a detrimental effect on the estimation procedure. 
  The results for Germany, Italy, Belgium and the Netherlands are presented in Table \ref{tab:mortality}.
  
   \begin{figure}[h!]	
   	\begin{tabular}{ll}
   	A & B \\ 
  	{\includegraphics[width=0.5\linewidth, clip, trim=3.5cm 9.5cm 3.5cm 9.5cm]{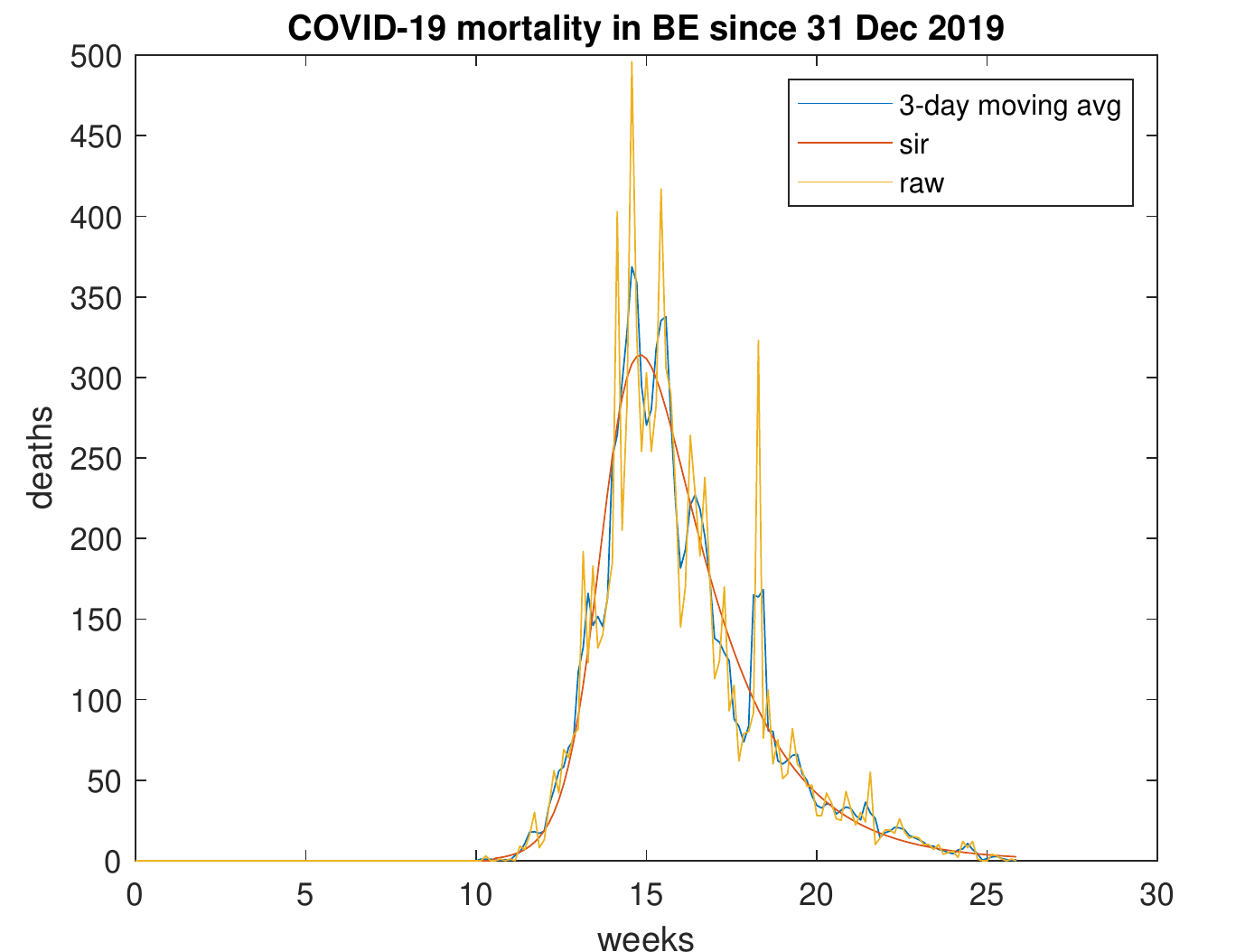}} &
  	{\includegraphics[width=0.5\linewidth, clip, trim=3.5cm 9.5cm 3.5cm 9.5cm]{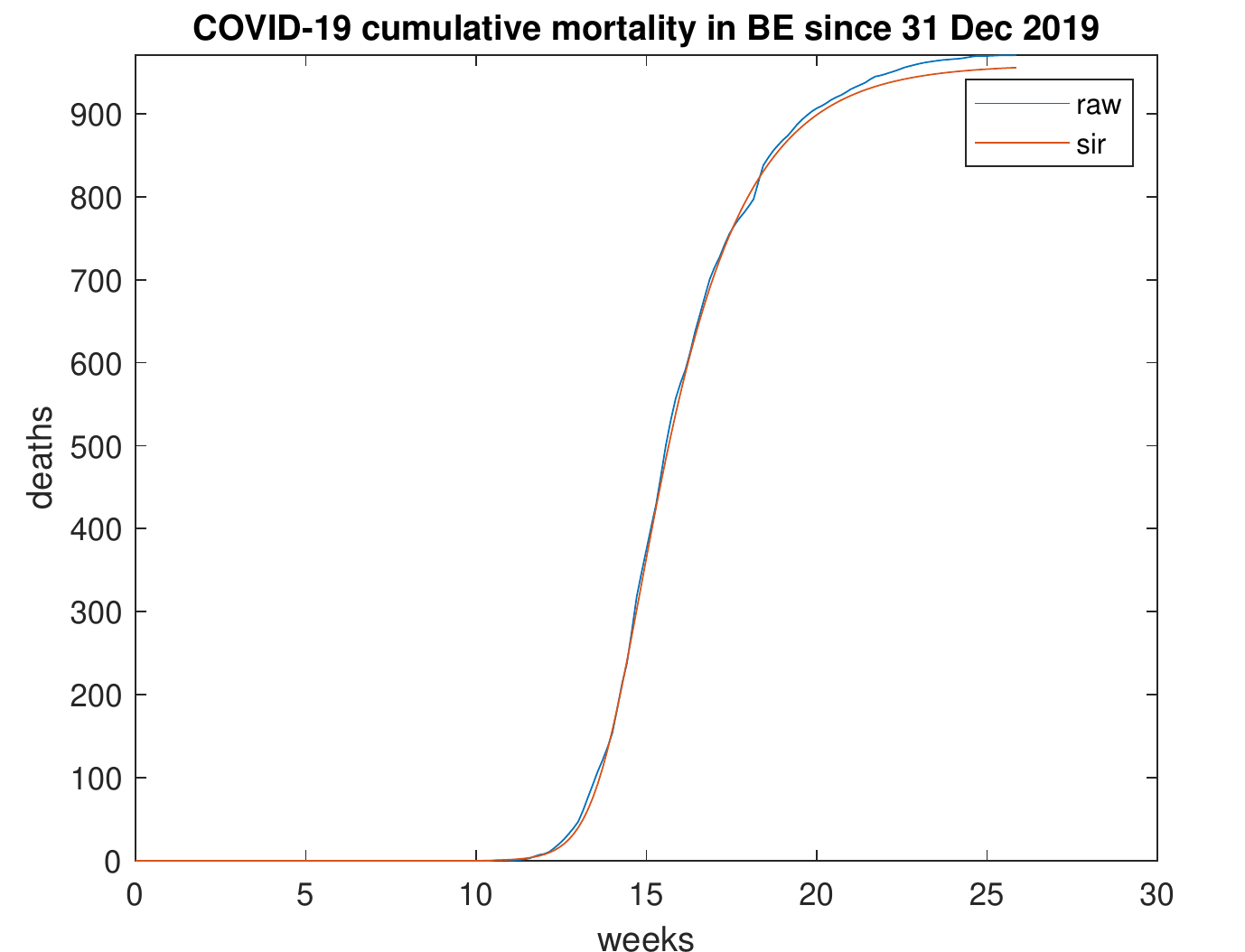}}
  \end{tabular}
  	\caption{Case fatality model for Belgium}
  	\label{fig:belgium}
  	A -- The parametric fit of the case fatality data;
  	B -- Cumulative deaths compared to the estimate from the r-variable.
  \end{figure}
    
  \begin{table}[h!]
  	\centering
  	\begin{tabular}{lllll}
  		\hline
  		Country  	&	g 	   & $R_0$	& T[weeks]  &  $i_m$ \\
  		\hline
  		Belgium	 	& 	0.7380 & 1.3549 & 14.8 		&	313.97 \\
		Netherlands &	0.5009 & 1.9962 & 14.1		&	156.72 \\
  		Germany 	& 	0.6109 & 1.6370 & 15.1 		&	226.51 \\
  		Italy		&   0.3734 & 2.6760 & 12.8		&	785.39 \\
  		\hline
  	\end{tabular}
  	\caption{Case fatality parameters}\label{tab:mortality}
  	T is given in weeks and refers to the time passed since 1\textsuperscript{st} Jan 2020.  
  \end{table}

  \subsection{Analysis of incidence data}\label{sec:morbidity}
  
  \begin{figure}[h!]
  	\begin{tabular}{ll}
  	A & B \\
  	\includegraphics[width=0.5\linewidth, clip, trim=3.5cm 9.0cm 3.5cm 9.5cm]{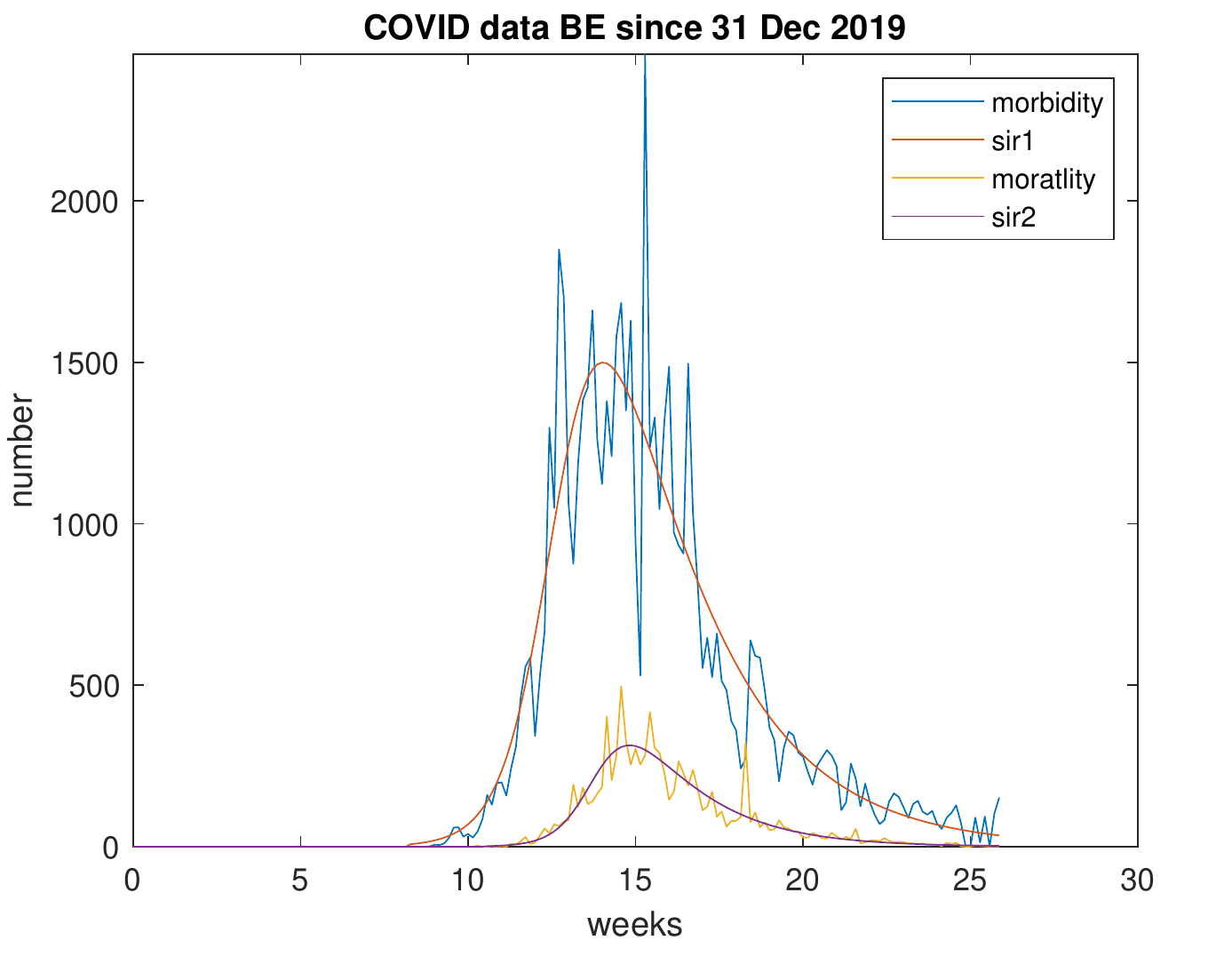} &
  	\includegraphics[width=0.5\linewidth,clip, trim=3.5cm 9.0cm 3.5cm 9.5cm]{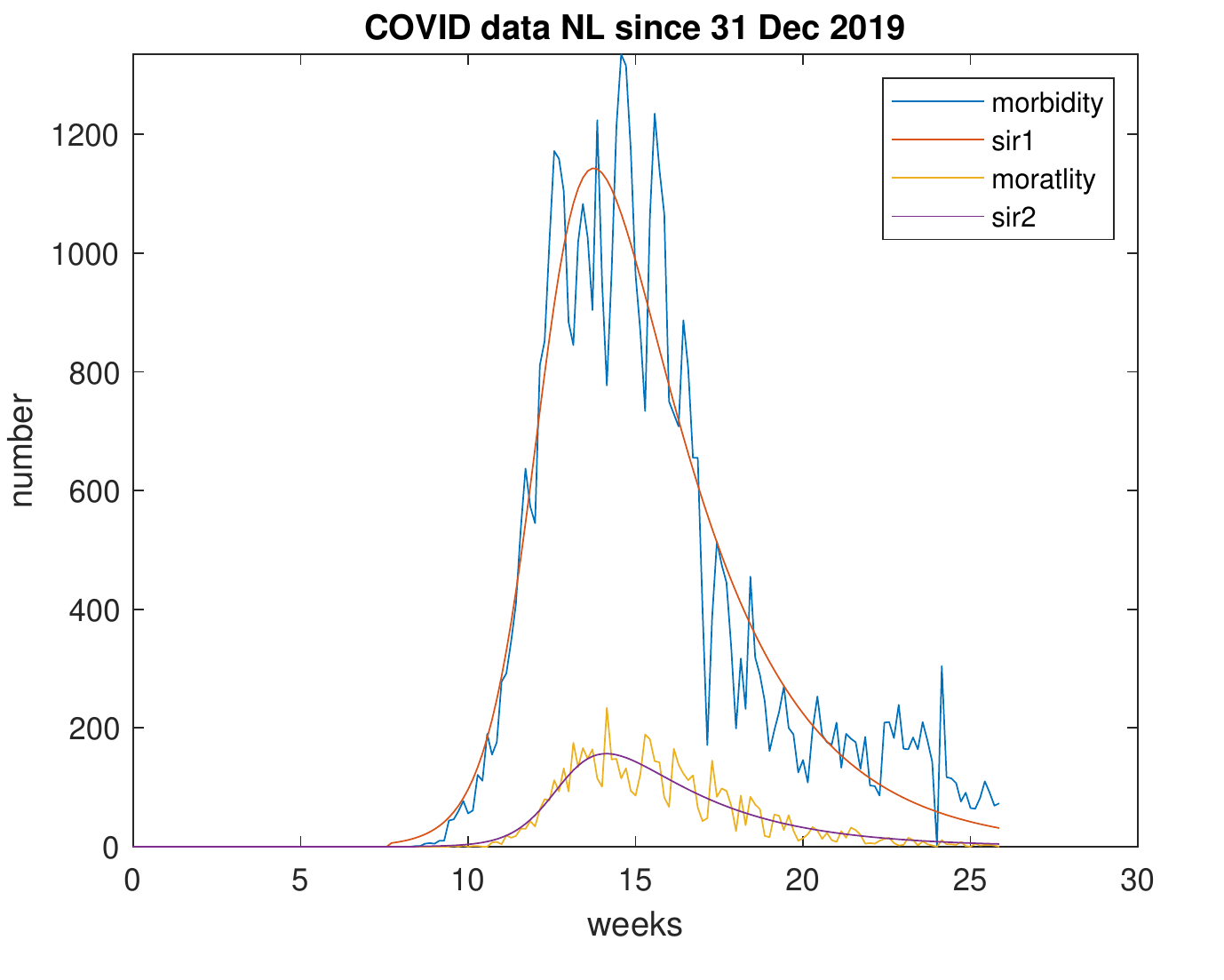} \\
  	C & D \\
  	\includegraphics[width=0.5\linewidth, clip, trim=3.5cm 9.0cm 3.5cm 9.5cm]{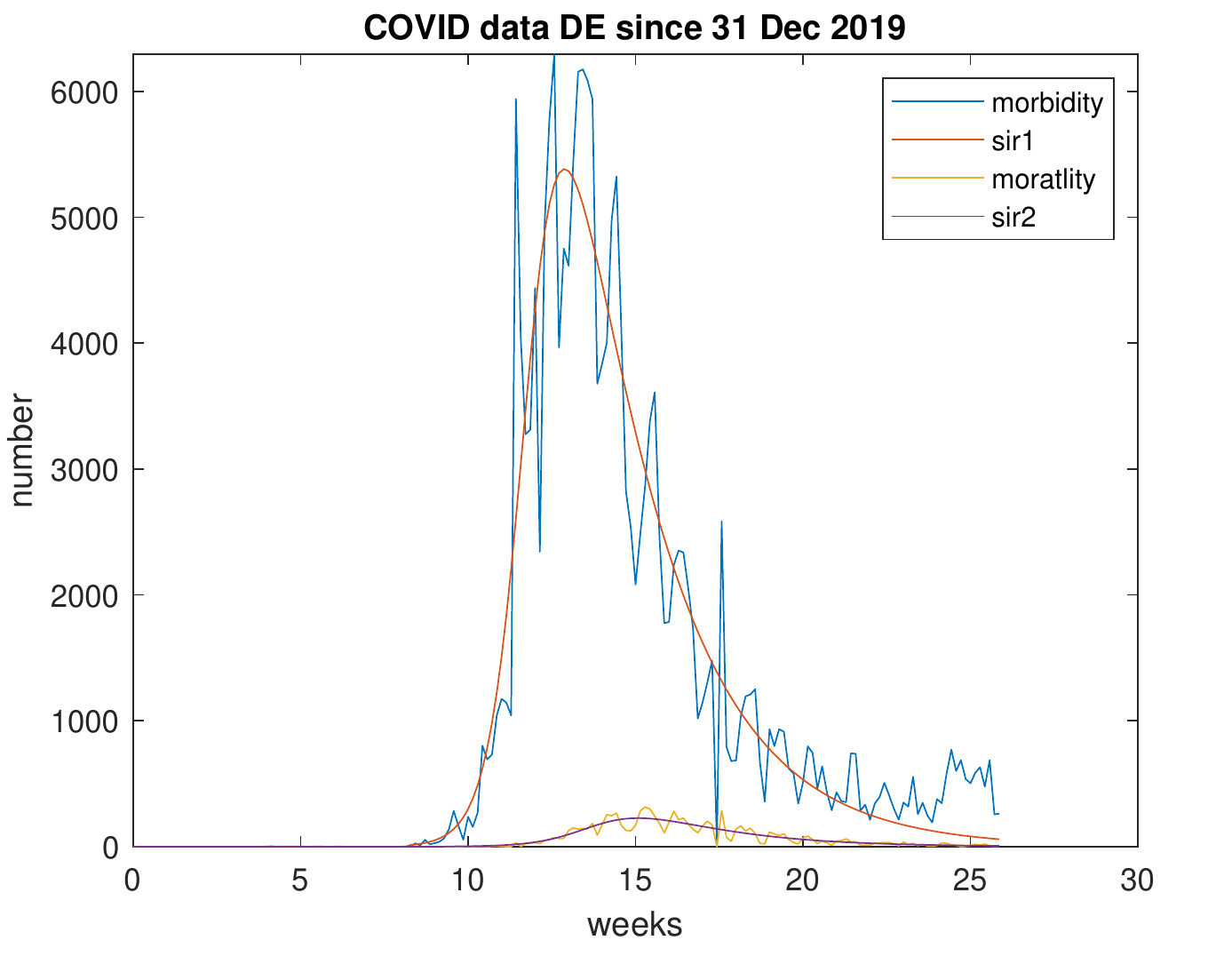} &
  	\includegraphics[width=0.5\linewidth,clip, trim=3.5cm 9.0cm 3.5cm 9.5cm]{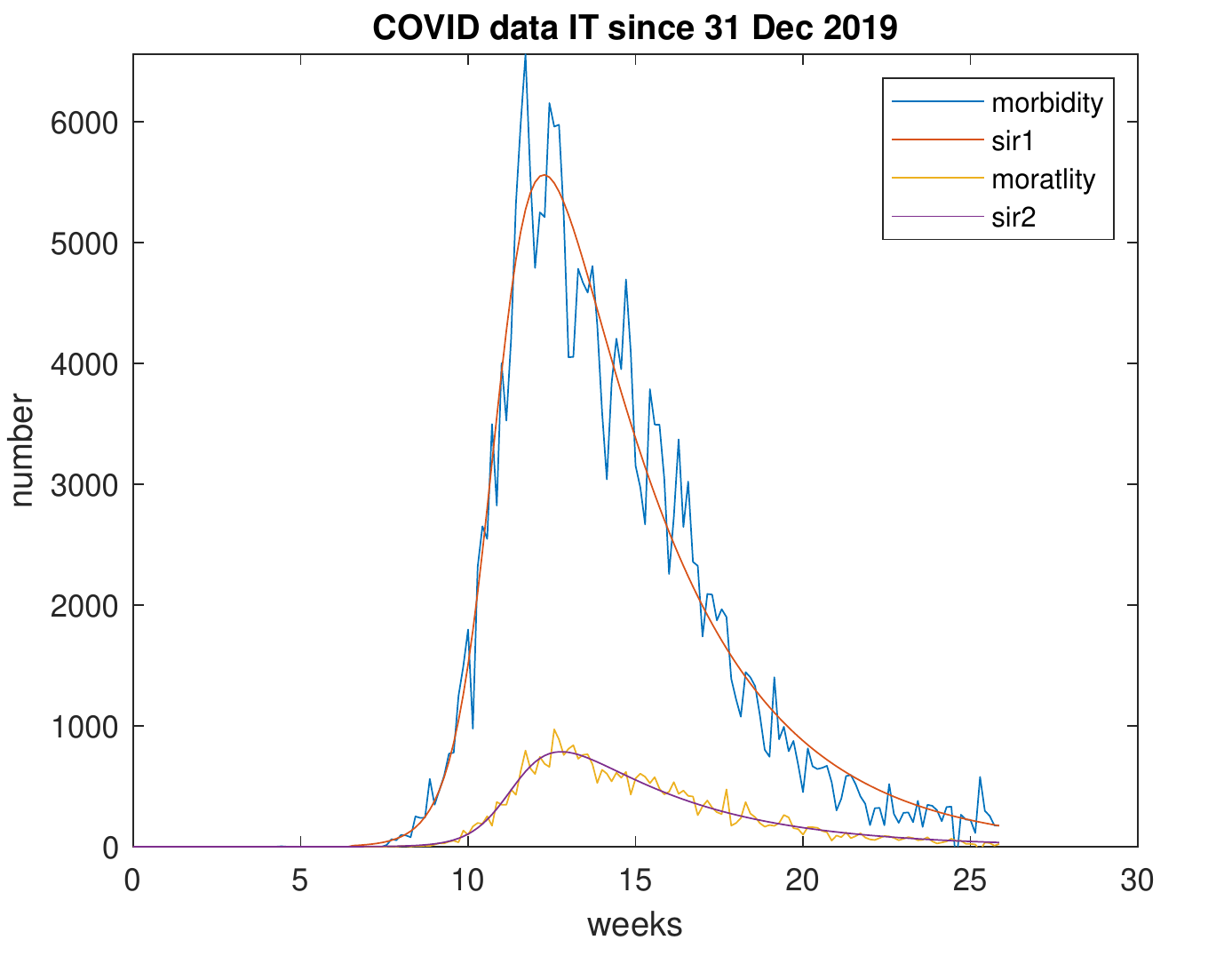} 	
  	\end{tabular}
 	\caption{Incidence and case fatality model fits for Germany and Italy}
	 \label{fig:germanyitaly}
	 
	 A -- combined data for Belgium by 29 Jun 2020;
	 B -- combined data for The Netherlands by 29 Jun 2020;
	 C -- combined data for Germany by 29 Jun 2020;
	 D -- combined data for Italy by 29 Jun 2020;
	 The raw data are smoothed with a 3-day moving average filter; 
	 mortality represents the case fatality, morbidity represents the incidence. 
  \end{figure}
  The raw data demonstrate weekly fluctuations most probably caused by the reporting irregularity due to holidays and fluctuations in the testing demand. 
  It is especially pronounced for the morbidity data of Germany for the presented period. 
  The incidence data were analysed in the same way using the fitted parameters for the mortality as initial values. 
  The fitting results demonstrate different lags of the peaks of incidence vs case fatality in the studied countries.
  For example for Germany it was 2.2 weeks, while for the Netherlands it was 0.3 weeks.  

\begin{table}[h!]
  \centering
\begin{tabular}{lllll}
	\hline
	Country		&	g	   & $R_0$  & T[weeks] 	&	 $i_m$ \\
	\hline
	Belgium	    &	0.5500 & 1.8183	&	14.0 	&	1499.59 \\
	Netherlands &	0.5149 & 1.9420 &	13.8	&	1143.25 \\
	Germany	    &	0.5483 & 1.8237	&	12.9 	&	5383.26 \\
	Italy	    &	0.4006 & 2.4961	&	12.3 	&	5560.30 \\
	\hline
\end{tabular}
  \caption{Incidence parameters}\label{tab:incidence}
   T is given in weeks and refers to the time passed since 1\textsuperscript{st} Jan 2020;  $R_0=1/g$; $i_m$ corresponds to the peak of the case fatality.  
\end{table}

  \subsection{Tracking of multiple outbreaks}\label{sec:multiple}
  
  Changes of containment policies are meant to result in changes in the epidemic outbreak dynamics. 
  This can be followed by the SIR model as demonstrated in the Bulgarian incidence dataset, where resuming of public sports events in the end of June correlates with the 3\textsuperscript{rd} and ongoing increase of incidence (Fig. \ref{fig:bulgaria3}).   
  The process is difficult to automate because of the fluctuations in the data. 
  Nevertheless, it was possible to accurately track the past outbreaks.  
 
  \begin{figure}[h!]
  	\centering
  	{\includegraphics[width=0.7\linewidth,clip, trim=3.5cm 9.0cm 3.5cm 9.5cm]{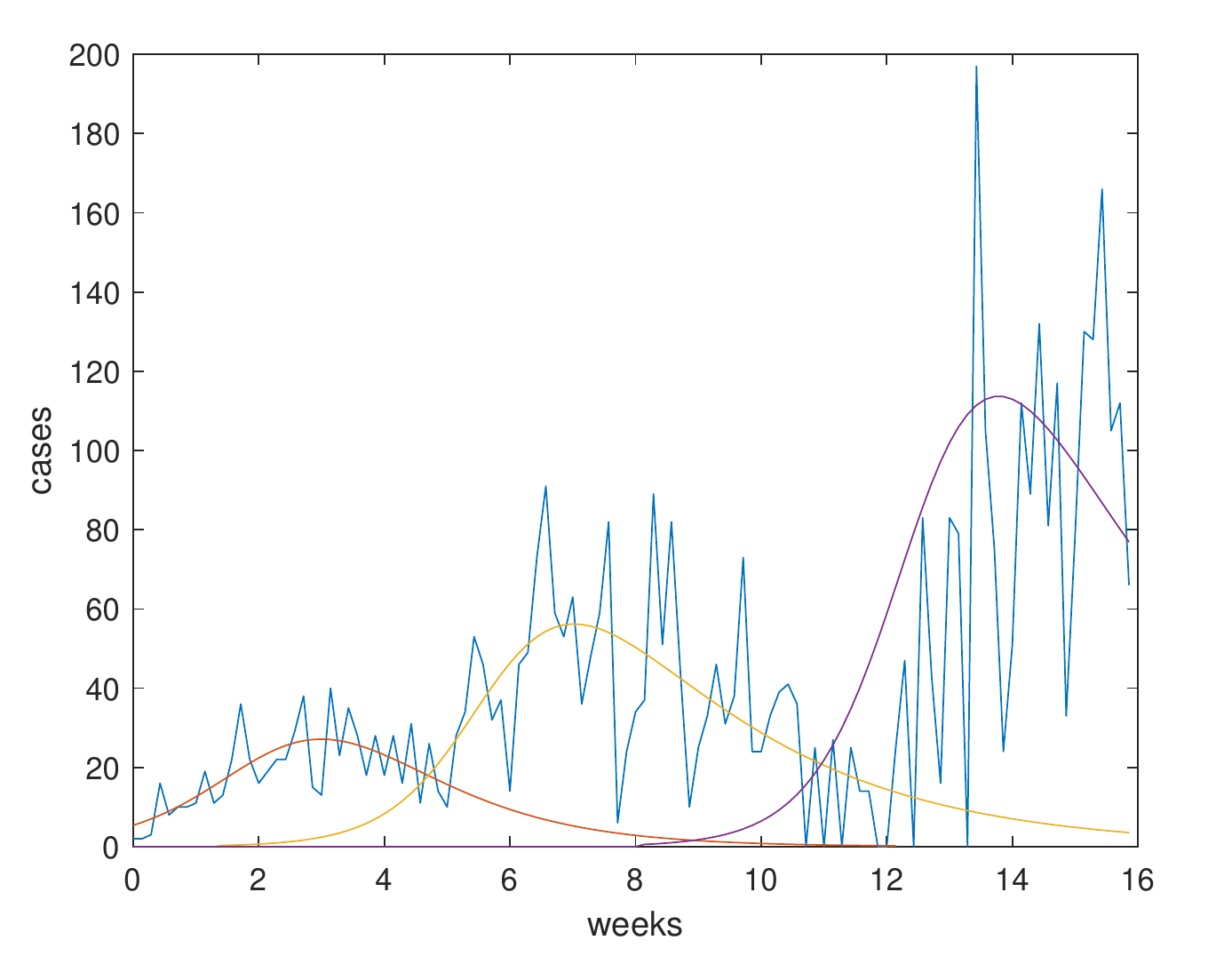}}
  	\caption{Modelling of consecutive outbreaks in Bulgaria by 29 Jun 2020}
  	\label{fig:bulgaria3}
  	Raw data are compared to the fitted outbreaks. 
  	The origin corresponds to 8 March 2020 when the first COVID19 case was reported.  
  \end{figure}
  
  \section{Discussion}\label{sec:disc}

	The SIR model was formulated to model epidemic outbreaks \cite{Kermack1927}.The model has only two independent variables and two parameters, which allow for  their estimation from data. There is a renewed interest in this model in view of the coronavirus disease 2019 (COVID-19) pandemics \cite{Fanelli2020, Postnikov2020, Cooper2020, Record2020}.
	For decades the model evaded the efforts of the community to derive explicit solution. 
	A formal analytical solution of the SIR model has been found only recently and formulated in the traditional setting of an initial value problem \cite{Harko2014}.
	Yet, another perspective can be also of merit -- the model can be treated as a manifold problem, which can be parametrized by any point on the flow. 
	This allows for efficient curve-fitting approach as demonstrated by the presented results.
	The approach is exemplified with data from the European Centre for Disease Prevention and Control (ECDC) for several European countries in the period Jan 2020 -- Jun 2020. 
	The presented results can be discussed along three main directions.
	
	\subsection{Analytical}
	  The present manuscript establishes novel analytical results about the SIR model.
	  From mathematical perspective, these are mostly the non-elementarity results for most of the presented integrals. 
	  Another interesting point is the proof of the non-Liouvillian character of the incidence \textit{i}-function.
	  An alternative proof could be also given, in principle, based on the work of Prelle and Singer \cite{Prelle1983}.

  \subsection{Numerical}
  Presented results demonstrate the robustness of the fitting procedure with regard to the fluctuations in the raw data. 
  On the other hand, more efforts are necessary in establishing a robust asymptotic of the incidence \textit{i}-function.
  This is a clear direction for future research. 
 
  On the second place, the analytical formula involving first exponentiation and then computation of the Lambert W function has disadvantages for large arguments due to float under or overflows \cite{Corless2002}.
  Therefore, another special function can be used in principle, notably the Wright $\Omega$ function. 
  On the other hand, optimized routines for its calculation are not readily available in MATLAB.  
  
  \subsection{Epidemiological implications of the results}
   
  Recorded data necessarily suffer from diverse biases. 
  For example, the marked difference in the availability of tests in the early stages of the pandemics in different countries. 
  Another one is the unsteady reporting resulting in fluctuations of the numbers. 
  This severely limits the usefulness of the model formulation as an initial value problem, which is the standard mathematical assumption.  
  This can result is drastic overestimation of the infection peak (see for example the predictions for UK in \cite{Ferguson2020}).   
  
  Presented results indicate that there is a universality in the time evolution of COVID-19 and the same epidemic model, notably SIR, can be applied to countries having large differences in populations sizes and densities. 
  More interestingly, the model seems to fit well also the mortality data, which can be interpreted in the sense that the vulnerable population forms a distinct subpopulation from all susceptible individuals (e.g. elderly people).
  Presented data lend support to a simple modification of the SIR model:  notably -- the SIRD model with an independent population of dead (D) persons. 
  This corresponds to the recent findings of other authors \cite{Fanelli2020, Cooper2020}.
  
  A key finding of the present report is that simple models can be very useful in studying the epidemic outbreaks. 
  This can be eventually extended to predicting the effects of different containment measures or the lack thereof \cite{Record2020}. 

 \appendix 
 \section{Special functions}\label{sec:specfunct}
 
\subsection{The Lambert W function}\label{sec:lambert}

The Lambert W function can be defined implicitly by the equation
\[
W(z)e^{W(z)}=z, \quad z \in \fclass{C}{}
\]
We observe that by Lemma \ref{th:comp} $W(z)$ is transcendental.
Furthermore, the Lambert function obeys the differential equation for $x \neq -1$
\[
W(x)^\prime=\frac{e^{-W(x)}}{1+ W(x)}
\]
W is a multivalued function. It has many applications in pure and applied mathematics.
Properties of the W function are given in \cite{Corless1996}.
Useful identities 
\begin{flalign}\label{eq:lambident}
e^{-W(z)} &=\frac{W(z)}{z} \\
e^{n W(z)} &=\left( \frac{z}{W(z)}\right)^n \\
\log{W(z)} &= \log{z} -W(z) \\
W \left(\frac{n z^n}{W(z)^{n-1}} \right) &= n W(z), \quad n>0, z>0
\end{flalign}
The W function is non-elementary and in particular it is non-Liouvillian \cite{Bronstein2008}. 
Its indefinite integral is:
\[
\int W(x)dx=x \operatorname{W}(x)+\frac{x}{\operatorname{W}(x)}-x
\]

\begin{proposition}\label{prop:indefint2}
	\[
	\int \frac{dy}{1+ W\left(-\frac{e^\frac{y-c}{g}}{g} \right) }   =g \log {\left( -g W\left(-\frac{e^\frac{y-c}{g}}{g} \right) \right)}   + const
	\]
\end{proposition}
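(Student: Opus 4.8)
The plan is to verify the claimed antiderivative by direct differentiation of the right-hand side, which reduces the proposition to an algebraic identity about the Lambert $W$ function. Introduce the shorthand $u(y) = -\frac{1}{g}e^{(y-c)/g}$, so that the integrand is $\frac{1}{1 + W(u(y))}$ and the right-hand side is $g\log\bigl(-g\,W(u(y))\bigr) + \mathrm{const}$. Note that $u'(y) = \frac{1}{g}u(y)$.

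First I would differentiate $g\log\bigl(-g\,W(u)\bigr)$ using the chain rule: this equals $g \cdot \frac{W'(u)}{W(u)}\,u'(y)$, since the constant factor $-g$ inside the logarithm drops out under differentiation. Next I would substitute the differential equation for the Lambert function, $W'(u) = \frac{e^{-W(u)}}{1 + W(u)}$, together with the identity $e^{-W(u)} = W(u)/u$ from the list of useful identities in eq.~\eqref{eq:lambident}. This gives $W'(u)/W(u) = \frac{1}{u\,(1 + W(u))}$. Plugging in $u'(y) = u(y)/g$, the factors of $u$ and $g$ cancel cleanly, leaving exactly $\frac{1}{1 + W(u(y))}$, which is the integrand. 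Hence the stated function is an antiderivative.

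There is essentially no serious obstacle here — the proof is a short computation — but the one point requiring a little care is the branch and domain issue: the argument $-\frac{1}{g}e^{(y-c)/g}$ is negative (for $g>0$), so $W$ must be taken on one of its real branches $W_{\pm}$, and $-g\,W_{\pm}(u)$ must be positive for the logarithm to make sense. Since on the principal branch $W_0$ one has $-1/e \le u < 0 \Rightarrow -1 \le W_0(u) < 0$, and on $W_{-1}$ one has $W_{-1}(u) \le -1$, in both cases $-g\,W(u) > 0$, so the formula is valid branchwise; the differentiation identities quoted hold on each branch away from $W = -1$. I would state this domain restriction explicitly and then present the three-line differentiation as above.
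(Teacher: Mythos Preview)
Your proposal is correct and follows essentially the same approach as the paper: both verify the antiderivative by differentiating the right-hand side, using the chain rule together with $W'(u)=e^{-W(u)}/(1+W(u))$ and the identity $e^{-W(u)}=W(u)/u$. Your version is in fact slightly more detailed, since you make the branch/domain considerations explicit, whereas the paper's proof is the bare one-line computation.
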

\begin{proof}
We differentiate 
\[
g \left(\log W\left(-\frac{e^\frac{y-c}{g}}{g} \right)\right)^\prime=
 -\frac{e^{\frac{y-c}{g}- W\left(-\frac{e^\frac{y-c}{g}}{g} \right)}}{g  W\left(-\frac{e^\frac{y-c}{g}}{g}\right)  \left( 1+  W\left(-\frac{e^\frac{y-c}{g}}{g} \right) \right) }=
\frac{1}{1 + W \left(- g e^\frac{i-c}{g} \right) }
\]
\end{proof}
\begin{proposition}\label{prop:chvar}
	\[
	\int_{ g  \log{g} -g +c}^{i}{\left. \frac{d \xi}{\xi\, \left( {W}_{\pm}\left( -\frac{{{ e}^{\frac{\xi-c}{g}}}}{g}\right) +1\right) }\right.} = 
	\int_{g}^{-g {W}_{\pm}\left( -\frac{{{ e}^{\frac{i-c}{g}}}}{g}\right) }{\left. \frac{dy}{y\, \left( g \log{y}-y+c\right) }\right.}
	\]
\end{proposition}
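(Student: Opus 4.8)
The plan is to establish the identity by a single change of variables in the left‑hand integral, exploiting the first integral of the reduced system obtained in Section~3. Recall that integrating \ref{eq:dids} yields along the flow the relation $\xi=-s+g\log s+c$, which inverts to $s=-gW_\pm\!\left(-e^{(\xi-c)/g}/g\right)$, the two Lambert branches corresponding precisely to the pre‑peak regime $s>g$ and the post‑peak regime $s<g$. Accordingly, in the left‑hand side I would introduce the new integration variable
\[
y:=-gW_\pm\!\left(-\tfrac{e^{(\xi-c)/g}}{g}\right),\qquad\text{equivalently}\qquad \xi=g\log y-y+c .
\]

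The computation that follows is short. Differentiating $\xi=g\log y-y+c$ gives $d\xi=\left(\tfrac{g}{y}-1\right)dy=\tfrac{g-y}{y}\,dy$, while the definition of $y$ furnishes directly $W_\pm\!\left(-e^{(\xi-c)/g}/g\right)+1=1-\tfrac{y}{g}=\tfrac{g-y}{g}$, together with $\xi=g\log y-y+c$. Substituting these three expressions into the left‑hand integrand, the factor $g-y$ cancels between the numerator $d\xi$ and the denominator, leaving the integrand $\dfrac{g\,dy}{y\,(g\log y-y+c)}$ — i.e. the right‑hand integrand, up to the overall constant which is exactly the one absorbed into the $\tau$‑normalisation in the two equivalent Newton schemes of Section~\ref{sec:numer} (I would reconcile this constant carefully against the statement as written). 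For the terminals: at the lower limit $\xi=g\log g-g+c=i_m$ one has $W_\pm(-1/e)=-1$, hence $y=g$; and at $\xi=i$ one has $y=-gW_\pm\!\left(-e^{(i-c)/g}/g\right)$; these are precisely the limits appearing on the right.

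Two points deserve attention rather than calculation. First, the substitution must be a genuine bijection on the range of integration: the map $y\mapsto g\log y-y+c$ is strictly monotone on each of the intervals $(0,g)$ and $(g,\infty)$, and these two intervals correspond respectively to the principal and the $-1$ branch of $W$, so on each branch separately the change of variables is legitimate, the value $y=g$ arising only as an endpoint. Second, at that endpoint the left‑hand integrand is improper, since $W_\pm+1\to 0$ like $\sqrt{\,i_m-\xi\,}$ (because $\xi-i_m\sim-(y-g)^2/2g$ near $y=g$); but $d\xi$ vanishes there like $g-y$ as well, so the transformed integrand is finite and the integral converges, and the change of variables accounts for this automatically. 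I expect the only real obstacle to be the careful bookkeeping of signs, orientation, and branch assignment; the algebraic heart of the argument is the one‑line cancellation of $g-y$.
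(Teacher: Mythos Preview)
Your approach is essentially identical to the paper's: both proofs use the single substitution $\xi-c=g\log y-y$ (equivalently $y=-gW_\pm(-e^{(\xi-c)/g}/g)$), cancel the common factor $g-y$, and identify the terminals $y=g$ and $y=-gW_\pm(-e^{(i-c)/g}/g)$. Your added remarks on branch-wise monotonicity and the integrable endpoint singularity are welcome rigor that the paper omits, and your flag about the overall factor of $g$ is well taken---the paper's own computation in fact terminates with $g\int_A^B\frac{dy}{y(g\log y-y+c)}$, so the stated identity is off by that same factor.
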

\begin{proof}
	We use the change of variables $ \xi -c= g \log{y}-y$  and then simplification by the defining identity of the Lambert W function.
	\begin{multline*}
		\int_{ g  \log{g} -g +c}^{i}{\left. \frac{d \xi}{\xi\, \left( {W}\left( -\frac{{{ e}^{\frac{\xi-c}{g}}}}{g}\right) +1\right) }\right.} = 
		\int_{A}^{B} \frac{\frac{g}{y}-1}{\left( g \log{y}-y+c\right) \, \left( {W}\left( -\frac{{{ e}^{\frac{g \log{y}-y+c}{g}-\frac{c}{g}}}}{g}\right) +1\right)  } dy= \\
		\int_{A}^{B} 
		\frac{g-y}{-\frac{y\, \left( g y \log{y}-{{y}^{2}}+ c y\right) }{g}+g y \log{y}-{{y}^{2}}+c y} dy =
		g\int_{A}^{B} \frac{ dy}{y\, \left( g \log{y}-y+c\right) }
	\end{multline*}
	where 
	\[
	g \log{A} - A +c= g \log{g} -g +c, \quad g \log{B} - B+c=i
	\]
	Therefore, $A=g$ and $B= -g {W}\left( -\frac{{{ e}^{\frac{i-c}{g}}}}{g}\right) $.
\end{proof}
\subsection{The Wright $\Omega$ function}\label{sec:omega}

The Wright $\Omega$ function is related to the Lambert W function \cite{Corless1996}
\[
\Omega(z)=W_{K(z)} \left( e^z\right) , \quad z \in \fclass{C}{}
\]
where $ K(z) = \lceil \left( Im(z) - \pi\right)  / 2 \pi \rceil$ is the unwinding number of $z$.
Moreover, 
\[
\Omega(z) + \log {\Omega(z)} = z, \quad z \neq t  \pm i \pi, t \leq-1,
\]
for the principal branch of the logarithm.
It is a transcendental function.
It obeys the differential equation
\[
\Omega(x)^\prime=\frac{\Omega(x)}{1+ \Omega(x)}
\]
The  $\Omega$ function is non-Liovillian \cite{Bronstein2008}.
Its indefinite integral is:
\[
\int \Omega(x) dx =\frac{{{\Omega(x)}^{2}}}{2}+\Omega(x)+ const
\]

\section{Differential fields}\label{sec:diffields}

\begin{definition}
	Denote by $\fclass{C}{}\left( x, c_i, \theta_i\right) $ the complex-valued ring, generated by the finite set of rational functions $\{\theta_i \}_i^n$ and constants $\{c_i \}_i^n$. 
\end{definition}
\begin{definition}
	An element $\theta$ is called \underline{algebraic} if $P(x, \theta) = 0$ for some polynomial 
	\[
	P(x, t) = t^m + a_{m-1}t^{m-1} + \ldots + a_0,
	\]
	where $a_i$ can be also rational functions of $x$,
	or else it is called \underline{transcendental}.
\end{definition}
\begin{lemma}[Composition lemma]\label{th:comp}
	Denote by $a$ and $t$ the algebraic or transcendental elementary functions, respectively.
	The following compositions hold
	\[
	a \circ a = a, \quad t \circ a = t,  \quad a \circ t = t
	\]
\end{lemma}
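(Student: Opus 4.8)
The plan is to work inside a fixed algebraically closed extension of $\mathbb{C}(x)$ — concretely the field $\overline{\mathbb{C}(x)}$ of all algebraic functions of $x$, which has transcendence degree $1$ over $\mathbb{C}$ — and to read ``$f$ algebraic'' as ``$f\in\overline{\mathbb{C}(x)}$'', i.e. $P(x,f(x))\equiv 0$ for some nonzero $P\in\mathbb{C}[x,y]$, with ``$f$ transcendental'' its negation. Compositions of elementary functions are elementary by Definition \ref{def:elementary}, so the only content is the algebraic/transcendental dichotomy. I also assume the algebraic function occurring in the last two identities is non-constant, since a constant inner or outer factor would make the composition constant and the assertion degenerate.

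First I would establish $a\circ a=a$. Let $f,g$ be algebraic, say $P(x,f(x))\equiv 0$ and $Q(x,g(x))\equiv 0$ with $P,Q\in\mathbb{C}[x,y]$. Substituting $x\mapsto g(x)$ into the first relation and clearing the denominators of its $x$-coefficients yields $\widetilde P\bigl(g(x),(f\circ g)(x)\bigr)\equiv 0$ with $\widetilde P\in\mathbb{C}[u,v]$, so $f\circ g$ is algebraic over $\mathbb{C}\bigl(x,g(x)\bigr)$; since $g$ is algebraic, $\mathbb{C}\bigl(x,g(x)\bigr)/\mathbb{C}(x)$ is finite, and transitivity of algebraicity gives $f\circ g\in\overline{\mathbb{C}(x)}$. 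Equivalently one may eliminate $g(x)$ through the resultant $\mathrm{Res}_u\bigl(\widetilde P(u,v),\widetilde Q(x,u)\bigr)\in\mathbb{C}[x,v]$, which — taking $P$ and $Q$ irreducible — is a nonzero polynomial annihilating $f\circ g$.

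For the transcendental cases I would use one auxiliary fact: \emph{the branch inverse of a non-constant algebraic function is again algebraic}. Indeed, if $a$ is non-constant with irreducible defining relation $P(u,a(u))\equiv 0$, then $\deg_u P\ge 1$ and $\deg_v P\ge 1$, and viewing $P$ as a polynomial in its first argument shows $P(a^{-1}(v),v)\equiv 0$, so $a^{-1}$ is algebraic and $a\circ a^{-1}=a^{-1}\circ a=\mathrm{id}$ on matching branches. Now $t\circ a=t$ and $a\circ t=t$ both follow by contradiction from $a\circ a=a$: if $t\circ a$ were algebraic then $t=(t\circ a)\circ a^{-1}$ would be a composition of two algebraic functions, hence algebraic, contradicting transcendence of $t$; and if $a\circ t$ were algebraic then $t=a^{-1}\circ(a\circ t)$ would be algebraic, the same contradiction.

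I expect the genuine difficulty to be bookkeeping rather than conceptual: one must make composition of the (in general multivalued) algebraic functions precise and choose branches so that $a\circ a^{-1}=\mathrm{id}$ holds on the nose, and in $a\circ a=a$ one must ensure the elimination does not collapse — hence the passage to irreducible/minimal defining polynomials, so the resultant is not identically zero. Once these points are settled the proof is a short exercise in field theory. As a by-product it explains the absence of any rule for $t\circ t$, which may be algebraic (e.g. $\exp\circ\log=\mathrm{id}$) or transcendental.
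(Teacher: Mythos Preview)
Your argument is correct and is in fact cleaner than the paper's. For $a\circ a=a$ the two proofs are essentially the same: the paper also substitutes one algebraic function into the defining relation of the other and concludes by ``finite sequence of algebraic operations'', which is your transitivity/resultant argument stated informally. The real difference is in the two transcendental cases. The paper does not argue abstractly: it simply takes $t=\exp$ or $t=\log$ and asserts that $\exp(a)$, $\log(a)$, $P(x,e^{x})$, $P(x,\log x)$ are not algebraic. This checks only the generators $\exp$ and $\log$ and leaves implicit why it suffices for a general transcendental elementary $t$ built from several such pieces. Your route --- observing that a non-constant algebraic function has an algebraic branch inverse and then deriving $t\circ a=t$ and $a\circ t=t$ by contradiction from $a\circ a=a$ --- handles an arbitrary transcendental $t$ in one stroke and reduces the whole lemma to the single case $a\circ a=a$. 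The price you pay is exactly the branch bookkeeping you flag: you need $a\circ a^{-1}=\mathrm{id}$ locally, and this is where all the care lives. The paper's approach avoids inverses altogether but is, as written, only a proof for $t\in\{\exp,\log\}$ rather than for a general transcendental elementary function.
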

\begin{proof}
 The case $a \circ a$ when $a(x)$ is a polynomial is trivial.
 Suppose that a and b are both algebraic:
  \[
  P(x, a) =0, \quad Q(x, b)=0
  \]
  Without loss of generality suppose that $a_i$ are polynomial.
  Formally, $b=\bar{f}_k(x)$ for any branch $k$ with $\bar{f}$ algebraic since it is a root of a polynomial, where the bar denotes the inverse function in order to avoid confusion with exponentiation.
  Therefore,
  \[
  a \circ b =  b^m + a_{m-1}b^{m-1} + \ldots + a_0 = \bar{f}_k^m (x) +a_{m-1} \bar{f}_k^{m-1} (x)  + \ldots + a_0 
  \]
  is algebraic since it is computed by a finite sequence of algebraic operations.
  
 Suppose that $t=exp(x)$.
 Then  $exp(a)$ is not algebraic, hence it is transcendental.
 \[
 P(x, e^x) = e^{x m} + a_{m-1}e^{x m- x} + \ldots + a_0
 \]
 is exponential.
 
 Suppose that $t=log(x)$.
 Then  $log(a)$ is not algebraic, hence it is transcendental.
 \[
 P(x,\log(x)) =\log^m{x} + a_{m-1}\log^{m-1} {x}+ \ldots + a_0
 \]
 is not algebraic, hence it is transcendental.
\end{proof}

In what follows is assumed that the differential field is of characteristic zero and has an algebraically closed field of constants. An element y of a differential field is said to be an \textit{exponential} of an element \textit{A} if $y^\prime = A y$, 
an exponential of an integral of an element \textit{A} if $y^\prime =A y$;
logarithm of an element \textit{A} if $y^\prime = A^\prime/A$, and an integral of an element \textit{A} if $y^\prime= A$.

The next definition is due to \cite{Bronstein2008}.
\begin{definition}
	Let $(k, ^\prime \equiv d/dx)$ be a differential field of characteristic 0.
	A differential extension $(K, ^\prime \equiv d/dx)$ of \textit{k} is called Liouvillian over k if there are $\theta_1, \ldots, \theta_n \in K$, such that
	$K=C(x, \theta_1, \ldots, \theta_n )$ and for all i, at least one of the following 
	\begin{enumerate}
		\item $\theta_i$ is algebraic over $k(\theta_1, \ldots, \theta_{n-1} )$
		\item $\theta_i^\prime \in k(\theta_1, \ldots, \theta_{n-1} )$
		\item $\theta_i^\prime/\theta_i \in k(\theta_1, \ldots, \theta_{n-1} )$
	\end{enumerate}
  holds.
 The constant subfield $C(K)$ of K is defined to be the set of c in K, such that $c^\prime = 0$.
\end{definition}

The next theorem is due to \cite{Conard2005}.
\begin{theorem}\label{th:elemdiff}
	If K is an elementary field, then it is closed under differentiation.
\end{theorem}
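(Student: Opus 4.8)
The plan is to prove closure by induction along a tower of simple extensions exhibiting $K$ as an elementary field. Write $k = k_0 \subseteq k_1 \subseteq \cdots \subseteq k_n = K$, where $k$ is the base field — concretely $\mathbb{C}(x)$ with the usual derivation, $x' = 1$ and $c' = 0$ for $c \in \mathbb{C}$ — and each $k_{j+1} = k_j(\theta_{j+1})$ adjoins a single generator $\theta_{j+1}$ that is (a) algebraic over $k_j$, or (b) an exponential $\theta_{j+1} = e^{g}$ of some $g \in k_j$, or (c) a logarithm $\theta_{j+1} = \log g$ of some $g \in k_j$. The base case is the elementary fact that the derivative of a rational function of $x$ is again rational, so $\mathbb{C}(x)$ is closed under $'$.

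For the inductive step I assume $k_j$ is closed under $'$ and set $\theta = \theta_{j+1}$, $L = k_{j+1} = k_j(\theta)$. Every element of $L$ can be written $u = P(\theta)/Q(\theta)$ with $P, Q \in k_j[t]$ (and, in the algebraic case, with $\deg Q$ below the degree of the minimal polynomial so that $Q(\theta)\neq 0$). Differentiating by the quotient and chain rules expresses $u'$ through the coefficients of $P, Q$ and their derivatives — the latter lying in $k_j$ by the induction hypothesis — together with $\theta$ and $\theta'$. Hence $u' \in L$ for every $u \in L$ provided $\theta' \in L$, and it remains only to verify this in the three admissible cases.

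Cases (b) and (c) are immediate. If $\theta = e^g$ with $g \in k_j$ then $\theta' = g'\theta$ and $g' \in k_j$ by induction, so $\theta' \in k_j\cdot\theta \subseteq L$; if $\theta = \log g$ with $g \in k_j$ then $\theta' = g'/g \in k_j \subseteq L$. For case (a), let $m(t) = t^d + a_{d-1}t^{d-1} + \cdots + a_0 \in k_j[t]$ be the minimal polynomial of $\theta$ over $k_j$. Applying $'$ to $m(\theta) = 0$ and using the chain rule gives $m_t(\theta)\,\theta' + \widetilde{m}(\theta) = 0$, where $m_t$ is the formal $t$-derivative of $m$ and $\widetilde{m}(t) = \sum_{i=0}^{d-1} a_i' t^i$ has coefficients in $k_j$; therefore $\theta' = -\widetilde{m}(\theta)/m_t(\theta) \in k_j(\theta) = L$, as long as $m_t(\theta) \neq 0$.

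I expect the one genuine obstacle to be precisely this last point: ensuring $m_t(\theta) \neq 0$ so the division is legitimate. Here the standing hypothesis of characteristic zero is what saves the argument — $m$ is irreducible, hence separable, so $m_t$ is a nonzero polynomial of degree $d-1$ not divisible by $m$, and thus $m_t(\theta) \neq 0$. Once $\theta' \in L$ is secured in all three cases the inductive step closes and the theorem follows. As a byproduct, $K$ — already closed under the field operations — is a genuine differential field, which is the form of the statement actually invoked in the Liouville-type arguments of Section \ref{sec:louv}.
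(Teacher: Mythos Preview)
Your proof is correct and is the standard tower-induction argument for this result. Note, however, that the paper does not supply its own proof of this theorem: it is simply stated with attribution to Conrad \cite{Conard2005}, so there is nothing to compare against beyond observing that you have filled in what the paper leaves as a citation. The one point worth flagging is that your case (a) relies on the characteristic-zero hypothesis (for separability of the minimal polynomial), which the paper does assume globally in Appendix \ref{sec:diffields}; you correctly identify and use this.
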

An elementary integrability theorem due to Conrad \cite{Conard2005}.
\begin{theorem}[Rational Liouville criterion]\label{th:conrad}
	For $f, g \in \fclass{C}{}(x)$ with $f$ and $g$ non-constant the function $f(x)e{^g(x)}$ can be integrated in elementary terms if and only if there exists
	a rational function $h \in \fclass{C}{}(x)$ such that $h^\prime + g^\prime h  = f$.
\end{theorem}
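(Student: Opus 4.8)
The plan is to prove the two implications separately: the forward direction is a one-line differentiation, and the converse rests on Liouville's theorem together with a structural analysis of the derivation on $\mathbb{C}(x)(e^{g})$.

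For the forward direction, if $h\in\mathbb{C}(x)$ satisfies $h'+g'h=f$, then $(he^{g})'=h'e^{g}+hg'e^{g}=(h'+g'h)e^{g}=fe^{g}$, so $he^{g}$ is an elementary (indeed Liouvillian) antiderivative of $fe^{g}$; nothing beyond $h\in\mathbb{C}(x)$ is used here.

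For the converse, set $t:=e^{g}$. Since $g$ is non-constant, $t$ is transcendental over $\mathbb{C}(x)$: if $t$ were algebraic over $\mathbb{C}(x)$, differentiating its monic minimal polynomial $\Phi$ of degree $m$ (the derivative's leading term being $m g' t^{m}$) and comparing the result with $\Phi$ would force $\Phi(0)=0$, a contradiction. Hence $F:=\mathbb{C}(x)(t)$ is a rational function field with derivation $x'=1$, $t'=g't$ and constant field $\mathbb{C}$, which is algebraically closed. Because $\int fe^{g}$ is elementary, $ft\in F$ has an elementary antiderivative, so Liouville's theorem provides $c_{1},\dots,c_{n}\in\mathbb{C}$ and $v,u_{1},\dots,u_{n}\in F$ with
\[
ft=v'+\sum_{i=1}^{n}c_{i}\frac{u_{i}'}{u_{i}}.
\]
Grouping the logarithmic terms and factoring the $u_{i}$ into irreducibles of $\mathbb{C}(x)[t]$, I may assume that each $u_{i}$ is either an element of $\mathbb{C}(x)$ or a monic irreducible polynomial in $t$ of positive degree, and I may absorb a factor $u_{i}=t$ into $v$ via $t'/t=g'=(g)'$.

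The remaining and decisive step is a valuation/partial-fraction analysis in the variable $t$. The key lemma is that for a monic irreducible $P\in\mathbb{C}(x)[t]$ with $\deg_{t}P\ge1$ and $P\ne t$, the polynomial $P$ is coprime to its derivative $P'=(\deg_{t}P)\,g'\,P+(\text{lower }t\text{-order})$ — otherwise $P'/P$ would be the derivative of a rational function, which forces $P$ to be a power of $t$. It follows that differentiation strictly raises the pole order of $v$ at any such $P$, whereas $\sum c_{i}u_{i}'/u_{i}$ has only simple poles there and $ft$ none; since the left-hand side is a polynomial in $t$, no such $P$ can divide the denominator of $v$ or occur among the $u_{i}$. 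A similar, easier argument at the place $t$ and at $t=\infty$ — using that $h'+kg'h=0$ has no nonzero rational solution for $k\ge1$, since it would give $h=Ce^{-kg}$ — forces $v$ to be a genuine polynomial $v=\sum_{k=0}^{N}h_{k}t^{k}$ with $h_{k}\in\mathbb{C}(x)$ and forces every surviving $u_{i}$ into $\mathbb{C}(x)$, so $\sum c_{i}u_{i}'/u_{i}\in\mathbb{C}(x)$. Then $v'=\sum_{k=0}^{N}(h_{k}'+kg'h_{k})t^{k}$, and matching coefficients of $t^{k}$ against $ft$ gives $h_{k}=0$ for $k\ge2$ and $h_{1}'+g'h_{1}=f$, so $h:=h_{1}\in\mathbb{C}(x)$ is the required rational function. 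I expect the pole-order bookkeeping — establishing that $v$ has no $t$-poles and that no surviving logarithm carries a factor of positive $t$-degree — to be the main obstacle, since it is precisely where the special shape $t'=g't$ of the derivation must be exploited, and it is the step common to every proof of a Liouville-type integrability criterion.
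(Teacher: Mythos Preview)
The paper does not supply a proof of this theorem: it is stated as a cited result (``due to Conrad'') and used as a black box, so there is no paper proof to compare against. Your proposal is therefore not a reconstruction but an independent proof, and it follows the standard Rosenlicht--Risch route that Conrad's notes themselves use: verify the easy implication by differentiating $he^{g}$, then for the hard implication work in $F=\mathbb{C}(x)(t)$ with $t=e^{g}$, invoke Liouville's theorem, and strip the representation $ft=v'+\sum c_i u_i'/u_i$ down via a pole-order argument in $t$ until only the $t^{1}$-coefficient of $v$ survives.

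Your outline is correct and the identification of the main obstacle is accurate. Two places deserve a little more care if you flesh this out. First, your transcendence argument for $t=e^{g}$ is compressed: the clean version observes that the constant term $a_0$ of a putative monic minimal polynomial must satisfy $a_0'/a_0=(\deg\Phi)\,g'$, and since $g'$ is a derivative of a rational function all its residues vanish, forcing $a_0$ constant and hence $g'=0$, a contradiction; your ``forces $\Phi(0)=0$'' is not quite the right conclusion. Second, the line ``$P$ is coprime to $P'$ \dots otherwise $P'/P$ would be the derivative of a rational function'' conflates two separate facts; the coprimality follows simply because $P'-(\deg_t P)g'P$ has strictly smaller $t$-degree and is nonzero whenever $P\neq t$ is monic irreducible. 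With those two points tightened, the argument is complete and matches the classical proof.
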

The last result can be extended to algebraic functions as follows.
\begin{corollary}[Algebraic Liouville criterion]\label{th:algint}
	For $f(x), g(x)$ algebraic and non-constant, the function $f(x)e^{g(x)}$ can be integrated in elementary terms if and only if there exists an algebraic function $h(x)$, for which  $h^\prime + g^\prime h  = f$.
\end{corollary}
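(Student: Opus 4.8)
The plan is to re-run the proof of the rational criterion, Theorem~\ref{th:conrad}, over an algebraic function field in place of $\mathbb{C}(x)$. Set $k := \mathbb{C}(x)(f,g)$. Since $f$ and $g$ are each algebraic over $\mathbb{C}(x)$, the tower $\mathbb{C}(x)\subseteq\mathbb{C}(x)(f)\subseteq k$ is finite, so $k$ is a finite algebraic extension of $\mathbb{C}(x)$; it is a differential field (closed under $'$, cf.\ Theorem~\ref{th:elemdiff}), its field of constants is still $\mathbb{C}$, and every element of $k$ is an algebraic function. First I would record the auxiliary fact that $t := e^{g}$ is transcendental over $k$. If it were algebraic, then $k(t)$ would be a finite extension of $\mathbb{C}(x)$, hence the function field of a curve $X$ on which $t$ is an honest function, so at every place $P$ one has $\operatorname{res}_P(dt/t) = \operatorname{ord}_P(t) \in \mathbb{Z}$; but $dt/t = dg$ is an exact differential (as $g \in k$ is a function on $X$), so all these residues vanish, $t$ has neither zeros nor poles, $t$ is constant, and $g' = t'/t = 0$, contradicting that $g$ is non-constant. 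Hence $K := k(t)$ is a differential field with $t' = g't$ and field of constants $\mathbb{C}$.

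For the forward implication, assume $fe^{g} = ft$ has an antiderivative in some elementary extension of $K$. Liouville's theorem --- valid over any characteristic-zero differential field with algebraically closed field of constants, in particular over $K$ --- then yields $v \in K$, finitely many $u_i \in K^{\times}$, and constants $c_i \in \mathbb{C}$ with
\[
ft \;=\; v' + \sum_i c_i\,\frac{u_i'}{u_i}.
\]
Next I would carry out the usual monomial analysis in the transcendental exponential $t$, precisely as in the exponential case behind Theorem~\ref{th:conrad}: since $t'/t \in k$, differentiation preserves the $t$-adic valuation at every monic irreducible of $k[t]$ other than $t$ itself, so no such factor can sit in a denominator of $v$; the $u_i$ may then be taken in $k^{\times}$ (their $t$-parts feed only the $c_i\log$ terms and must cancel against one another); and $v$ collapses to a Laurent polynomial in $t$ over $k$ whose only surviving power is $t^{1}$, because the left-hand side is the pure monomial $ft$. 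Writing that surviving term as $ht$ and using $(ht)' = (h'+g'h)t$, the comparison of coefficients of $t^{1}$ in the displayed identity --- the remaining contributions lying in $k$ and carrying no $t$ --- produces $f = h' + g'h$ with $h \in k$, i.e.\ $h$ algebraic, as required.

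The converse is immediate: if $h$ is algebraic with $h' + g'h = f$, then $he^{g}$ lies in the elementary extension $\mathbb{C}(x)(h,g,e^{g})$ and $(he^{g})' = (h'+g'h)e^{g} = fe^{g}$, so $fe^{g}$ is integrable in elementary terms. The only substantive step is the middle paragraph, and the expected obstacle --- that the monomial bookkeeping might misbehave over a non-rational coefficient field --- turns out to be vacuous: the pole-counting at places of $k(t)$, the removal of $t$ from the logarithmic arguments, and the reduction of $v$ to degree one depend solely on $k$ being a characteristic-zero differential field with algebraically closed constants and on $t$ being a transcendental exponential monomial with $t'/t \in k$; none of these steps notices whether $k$ is rational or merely algebraic. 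Thus the argument of Theorem~\ref{th:conrad} transfers essentially verbatim.
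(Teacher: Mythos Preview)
Your argument is correct and follows the standard Liouville--Risch route: work over the algebraic function field $k=\mathbb{C}(x)(f,g)$, verify that $t=e^{g}$ is transcendental over $k$, invoke Liouville's theorem in $K=k(t)$, and run the exponential-monomial bookkeeping to force any elementary antiderivative into the form $ht$ with $h\in k$. This is genuinely different from the paper's proof, which proceeds by an integration-by-parts rearrangement: with $F$ a primitive of $f$, one writes $\int fe^{g}=Fe^{g}-\int Fg'e^{g}$, hence $\int(f+Fg')e^{g}=Fe^{g}$, and then identifies $h\equiv F$. That manipulation really only records the identity $(he^{g})'=(h'+g'h)e^{g}$, i.e.\ your easy converse; it does not explain why an elementary antiderivative of $fe^{g}$ must have the shape $he^{g}$ at all, nor why the resulting $h$ is forced to be algebraic rather than merely elementary (the paper's $F$ is only asserted to be elementary). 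Your approach supplies precisely those ingredients --- the transcendence of $e^{g}$ over $k$ via the residue argument, and the pole analysis in $k(t)$, are exactly what pin $h$ down inside $k$ --- so your proof is both different from and substantially more complete than the paper's.
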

\begin{proof}
Suppose that $f$ and $g$ are arbitrary elementary algebraic functions.
Denote the primitive of $f$ as $f \div F$.
The integral can be integrated by parts
\[
 I = \int f(x)e{^{g(x)}} dx = \int  {e^g(x)} dF=F (x) e^{g(x)} - \int  F(x)  \left(e^{g(x)}\right) ^\prime dx
\]
Therefore,
\[
\int \left( f(x)+ F(x)  g^\prime(x) \right) e^{g(x)} dx= F (x) e^{g(x)}
\]
We observe that $ g^\prime(x)$ is elementary by Th. \ref{th:elemdiff}.
The L.H.S has the form $f e^g$ and since $f(x)+ F(x)  g^\prime(x) $ is elementary we can identify
\[ 
h \equiv F, \quad f_1  \equiv f + F g^\prime = h^\prime + h g^\prime 
\]
so that $ \left( h^\prime + h g^\prime\right) e^g= \left( h e^g\right)^\prime  $
and the claim follows.
\end{proof}

 \bibliographystyle{plain}
 \bibliography{infbib}
\end{document}